\documentclass[10pt]{amsart}
\usepackage{amssymb}
\usepackage{mathrsfs}
\usepackage{setspace}
\usepackage{amsmath, bm}
\usepackage{mathrsfs}
\usepackage{xcolor}
\oddsidemargin 3ex
\evensidemargin 3ex
\textheight 8.1in
\textwidth 6.0in






\newcommand{\cC}{\mathcal{C}}
\newcommand{\cP}{\mathcal{P}}

\newcommand{\cH}{\mathcal{H}}

\newcommand{\cM}{\mathcal{M}}

\newcommand{\mZ}{\mathbb{Z}}
\newcommand{\mR}{\mathbb{R}}
\newcommand{\mC}{\mathbb{C}}
\newcommand{\f}{\mathfrak{f}}
\newcommand{\fd}{\mathfrak{f}^\dagger}
\newcommand{\fdd}{\mathfrak{f}^{\dagger2}}
\newcommand{\upx}{\partial_{\underline{x}}}
\newcommand{\ux}{\underline{x}}

\newcommand{\gpde}{\upx+ \zeta}

\newcommand{\ul}{\underline}

\newtheorem{theorem}{Theorem}[section]
\newtheorem{lemma}[theorem]{Lemma}
\newtheorem{proposition}[theorem]{Proposition}
\newtheorem{corollary}[theorem]{Corollary}

 \theoremstyle{definition}
\newtheorem{definition}[theorem]{Definition}

\theoremstyle{remark}
\newtheorem{remark}[theorem]{Remark}

\numberwithin{equation}{section}



\begin{document}

\title[Solutions for the L\'evy-Leblond equation]{Solutions for the L\'evy-Leblond or parabolic Dirac equation and its generalizations}

\author{Sijia Bao}
\address{Department of Mathematics\\ Harbin Institute of Technology\\ Harbin, 150001\\ China.}
\email{15b912013@hit.edu.cn}

\author{Denis Constales}
\address{Department of Electronics and Information Systems \\Faculty of Engineering and Architecture\\Ghent University\\Krijgslaan 281, 9000 Gent\\ Belgium.}
\email{denis.constales@ugent.be}

\author{Hendrik De Bie}
\address{Department of Electronics and Information Systems\\Faculty of Engineering and Architecture\\Ghent University\\Krijgslaan 281, 9000 Gent\\ Belgium.}
\email{hendrik.debie@ugent.be}

\author{Teppo Mertens}
\address{Department of Electronics and Information Systems\\Faculty of Engineering and Architecture\\Ghent University\\Krijgslaan 281, 9000 Gent\\ Belgium.}
\email{teppo.mertens@ugent.be}

\begin{abstract}
In this paper we determine solutions for the L\'evy-Leblond operator or a parabolic Dirac operator in terms of hypergeometric functions and spherical harmonics. We subsequently generalise our approach to a wider class of Dirac operators depending on 4 parameters.

\end{abstract}
\date{\today}
\keywords{L\'evy-Leblond operator, parabolic Dirac operator, hypergeometric function, spherical harmonic}
\subjclass[2010]{81Q05, 35Q41, 30G35} 

\maketitle

\section{Introduction}

 The Dirac operator was originally introduced by Dirac in \cite{D} to study the quantum mechanical behaviour of the electron. This operator arises by constructing a suitable square root of the Laplace operator. This is achieved using gamma matrices in 4 dimensions, or in arbitrary dimension by using Clifford algebras \cite{Lou}.
 
It took a remarkably long time before L\'evy-Leblond realized in \cite{LL} that it is equally possible to construct the square root of the Schr\"odinger operator and, in a similar vein, of the heat operator in $1+3$ dimensions. The resulting operators are called the L\'evy-Leblond operator and the parabolic Dirac operator respectively. The latter name was introduced in \cite{CKS, CSV} because the operator stated there factors the heat operator, which is parabolic. The authors of \cite{CKS, CSV} seemed unaware of the work of L\'evy-Leblond. However, they do give a construction in general $(m+1)$-dimensional space, whereas L\'evy-Leblond only treats dimension $1+3$.

The literature on the L\'evy-Leblond equation investigates it in various physical contexts. We mention supersymmetry and Chern-Simons theory, see e.g. \cite{Gom, Gau, Hor, Du, Hor2}. The direct inspiration for the present paper stems from the recent results in \cite{AK1, AK2}. There the symmetries of the  L\'evy-Leblond operator are computed and organised in terms of a $\mZ_2 \times \mZ_2$-graded Lie superalgebra. This immediately gives rise to two questions:
\begin{itemize}
\item {\bf Q1} Can we construct special classes of solutions of the L\'evy-Leblond equation in arbitrary dimension?
\item {\bf Q2}  How do the symmetries of \cite{AK1, AK2} act on the solutions of the L\'evy-Leblond equation?
\end{itemize}
The present paper aims to tackle {\bf Q1}, not only for the L\'evy-Leblond or parabolic Dirac equation but also for generalizations thereof. We postpone {\bf Q2} to future work.

The construction of special classes of solutions for Dirac operators has recently received a lot of interest. In \cite{SJ} series expansions of such solutions have been constructed. The case of the standard Dirac operator and polynomial expressions in it was treated in \cite{X, SX, R, G}, leading to solutions expressed as series expansions in Bessel functions multiplied with spherical monogenics \cite{Rood, Groen}.  
The hyperbolic Dirac operator was treated in \cite{E}, now using hypergeometric functions.
Several other generalisations of the Dirac operator were treated in \cite{Cac0, Cac1, Cac2}, where again hypergeometric functions arise.
The series expansions mentioned here often serve as a starting point in the construction of generalised Fourier transforms interacting with Dirac operators. We refer the reader to e.g. \cite{DX, D, DOV}.

It is our goal to show that the solutions for the L\'evy-Leblond operator can also be written using suitable hypergeometric functions. This will be achieved as follows. We expand the solutions first as a series of spherical harmonics or spherical monogenics multiplied with radial functions. Using a result from \cite{CSV}, we find the relation between the coefficients and express them as hypergeometric functions. A clever rewriting yields a concise result, see Theorem \ref{thmPara}. Finally, we will extend these techniques to find solutions of the generalised parabolic Dirac operator leading to Theorems \ref{thmGenPara} and \ref{thmGenPara zeta invertible}.

The paper is organised as follows. Section \ref{Sec2} contains the necessary preliminaries on Clifford algebras and the L\'evy-Leblond or the parabolic Dirac operator.
Section \ref{Sec3} determines solutions of the parabolic Dirac operator, culminating in Theorem \ref{thmPara}. In Section \ref{Sec4} we define the generalised parabolic Dirac operator and determine its solution.
We end with conclusions and an outlook for further research.

\section{Preliminaries}
\label{Sec2}

In this section we introduce all concepts necessary for the paper. We mostly follow the notations from \cite{CSV}.

\subsection{Clifford algebras}

Let us consider the vector space $\mR^{1, m+1}$ with basis $( \epsilon, e_1, e_2, \ldots, e_{m+1})$. We use it to construct the Clifford algebra $\cC \ell_{1,m+1}$ of signature $(1,m+1)$ as the algebra generated by the basis elements $\epsilon, e_1, e_2, \ldots, e_{m+1}$ under the relations
\begin{align*}
\epsilon^2&=+1\\
e_j^2 &= -1 \qquad j \in \{1, \ldots, m+1\}\\
e_j e_k + e_k e_j &=0 \phantom{-}\qquad j \neq k\\
\epsilon e_j + e_j \epsilon &=0 \phantom{-}\qquad j \in \{1, \ldots, m+1\}.
\end{align*}
The main involution is defined on the basis elements as

\[
e_j^{\ast} = -e_j, \qquad \epsilon^{\ast} = -\epsilon
\]
and it extends to the whole Clifford algebra $\cC \ell_{1,m+1}$ by

\[
(ab)^{\ast} = a^{\ast}b^{\ast} \qquad \text{and} \qquad (a+b)^{\ast} = a^{\ast} + b^{\ast}.
\]
We now introduce the nilpotent elements $\f =(e_{m+1}- \epsilon)/2$ and $\fd=-(e_{m+1}+ \epsilon)/2$ which satisfy
\begin{align}
\label{nilps}
\begin{split}
\f^2=\fdd=0\\
\f \fd + \fd \f =1\\
\f e_j + e_j \f = \fd e_j + e_j \fd = 0.
\end{split}
\end{align}
The Clifford algebras $\cC \ell_{1,1}$ generated by $\f, \fd$ and $\cC \ell_{0,m}$ generated by  $e_1, e_2, \ldots, e_{m}$ are clearly subalgebras of $\cC \ell_{1,m+1}$. In fact, it holds that $\cC \ell_{1,m+1} = \cC \ell_{1,1} \otimes \cC \ell_{0,m}$.
The elements $\f$ and $\fd$ are crucial to factor the heat operator.
For more details on Clifford algebras we refer the reader to e.g. \cite{Lou}.

\subsection{The L\'evy-Leblond or parabolic Dirac operator}

The standard (orthogonal) Dirac operator is given by 
\[
\upx= \sum_{i=1}^{m}e_{i}\partial_{x_{i}}. 
\]
Its square satisfies $\upx^2= - \Delta$, where $\Delta = \sum\limits_{i=1}^{m}\partial_{x_{i}}^2$ is the Laplace operator on $\mR^m$. The symbol of the Dirac operator $\upx$ is denoted by the vector variable
\[
\ux= \sum_{i=1}^{m}e_{i}x_i
\]
and satisfies $\ux^2 = - |\ux|^2=- \sum_{i=1}^{m}x_i^2$.

Using the nilpotent elements $\f, \fd$ we can now introduce the L\'evy-Leblond or parabolic Dirac operator.
\begin{definition}
\label{def-com}
We define the parabolic Dirac operator as
$$ D_{\underline{x},t}=\upx+\f\partial_{t}+\fd.$$
\end{definition}
It is easy to show, using (\ref{nilps}), that the parabolic Dirac operator indeed factors the heat operator, i.e.
$$(\upx +\f\partial_{t}+\fd )^{2}=-\Delta+\partial_{t}.$$

\begin{remark}
The parabolic Dirac operator can be altered to the L\'evy-Leblond operator by putting imaginary units at suitable positions. The L\'evy-Leblond operator will factor the Schr\"odinger operator instead of the heat operator, see \cite{LL, AK1, AK2}. This choice has no impact on the sequel, apart from the normalizations of constants.

\end{remark}


Using the decomposition $\mathcal{C}\ell_{1,m+1}=\mathcal{C}\ell_{1,1}\otimes\mathcal{C}\ell_{0,m}$ in terms of the nilpotent elements $\f, \fd$, any $\mathcal{C}\ell_{1,m+1}-$valued function $F=F(x,t)$ can be uniquely decomposed as
\begin{align}\label{2.1}
F(x,t)=F^{[0]}(x,t)+\f F^{[1]}(x,t)+\fd F^{[2]}(x,t)+\f\fd F^{[3]}(x,t)
\end{align}
where the components $F^{[i]}, i=0,1,2,3$, are $\mathcal{C}\ell_{0,m}$-valued polynomials in the space-time domain $\Omega=\{(x,t)\in\mathbb{R}^{m}\times\mathbb{R}^{+}\}$. We will say that a $\mathcal{C}\ell_{1,m+1}$-valued function $F(x,t)$ lies in $\mathcal{C}^{[k,l]}(\Omega)$ if and only if $F$ is of class $\mathcal{C}^{k}$ with respect to $x$ and of class $\mathcal{C}^{\ell}$ with respect to $t$. 

The following theorem was proven in \cite{CSV}.
\begin{theorem}
\label{conditionsPDE}
Let $F=F^{[0]}+\f F^{[1]}+\fd F^{[2]}+\f \fd F^{[3]}\in\mathcal{C}^{[2,1]}(\Omega)$. We have $D_{\underline{x},t}F=0$ in $\Omega$ if and only if its components $F^{[i]}, i=0,1,2,3$, satisfy
\begin{equation}
\left\{
            \begin{array}{lr}
            (\Delta-\partial_{t})F^{[i]}=0,\quad i=0,2  \\
             F^{[1]}=-\upx F^{[0]}\\
             F^{[3]}=\upx F^{[2]}-F^{[0]}.
             \end{array}
\right.\notag
\end{equation}
\end{theorem}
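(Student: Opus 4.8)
The statement is a componentwise reformulation of the single equation $D_{\underline{x},t}F=0$, so the plan is simply to apply $D_{\underline{x},t}=\upx+\f\partial_t+\fd$ to the decomposition \eqref{2.1} and to read off the four conditions obtained by separating the result along $\mathcal{C}\ell_{1,m+1}=\mathcal{C}\ell_{0,m}\oplus\f\,\mathcal{C}\ell_{0,m}\oplus\fd\,\mathcal{C}\ell_{0,m}\oplus\f\fd\,\mathcal{C}\ell_{0,m}$. Since the decomposition \eqref{2.1} is unique, $D_{\underline{x},t}F=0$ holds if and only if each of these four components vanishes.

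Before expanding I would collect the commutation rules. The elements $\f$ and $\fd$ are constant and, by \eqref{nilps}, anticommute with every $e_j$; hence as differential operators $\upx$ anticommutes with both, $\upx\f=-\f\upx$ and $\upx\fd=-\fd\upx$, while $\partial_t$ commutes with $\f$ and $\fd$. Together with $\f^2=\fdd=0$ and $\fd\f=1-\f\fd$ these are all the identities required. It is worth noting that in $D_{\underline{x},t}F$ the operator always acts from the left, so the $\mathcal{C}\ell_{0,m}$-valued components $F^{[i]}$ never need to be moved past $\f$ or $\fd$, and the only signs that appear come from $\upx$ crossing $\f$ or $\fd$.

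Carrying out the expansion term by term (for example $\upx(\f F^{[1]})=-\f\,\upx F^{[1]}$, $\fd(\f F^{[1]})=F^{[1]}-\f\fd\,F^{[1]}$, $\f\partial_t(\fd F^{[2]})=\f\fd\,\partial_t F^{[2]}$, $\upx(\f\fd F^{[3]})=\f\fd\,\upx F^{[3]}$, and all terms with a repeated $\f$ or $\fd$ vanishing), I expect to obtain that $D_{\underline{x},t}F=0$ is equivalent to the system
\begin{align*}
\upx F^{[0]}+F^{[1]}&=0, & \partial_t F^{[0]}-\upx F^{[1]}&=0,\\
F^{[0]}-\upx F^{[2]}+F^{[3]}&=0, & \partial_t F^{[2]}+\upx F^{[3]}-F^{[1]}&=0.
\end{align*}
The first and third equations are precisely $F^{[1]}=-\upx F^{[0]}$ and $F^{[3]}=\upx F^{[2]}-F^{[0]}$. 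Substituting these into the remaining two equations and using $\upx^2=-\Delta$ turns them into $(\Delta-\partial_t)F^{[0]}=0$ and $(\Delta-\partial_t)F^{[2]}=0$ — in the last equation the two occurrences of $\upx F^{[0]}$ cancel — which is exactly the system in the statement; the converse direction follows by reversing these substitutions.

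I do not anticipate any real obstacle: the content is entirely formal and the only place where care is needed is the sign bookkeeping when $\upx$ is commuted past $\f$ and $\fd$, together with the correct use of $\fd\f=1-\f\fd$ when reducing the term $\fd(\f F^{[1]})$.
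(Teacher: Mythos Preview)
Your argument is correct: the direct expansion of $D_{\underline{x},t}F$ along the basis $1,\f,\fd,\f\fd$ gives precisely the four equations you list, and the substitutions you describe reduce them to the system in the statement (and conversely). The paper does not actually prove this theorem; it only quotes it from \cite{CSV}, so there is no in-paper proof to compare against, but your computation is the standard one and matches what is done in that reference.
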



Finally, we introduce a few important spaces of polynomials with values in suitable Clifford algebras. Let $\cP_k(\mR^m)$ be the space of $k$-homogeneous polynomials on $\mR^m$. Then we introduce the space of spherical monogenics of degree $k$, see \cite{Rood, Groen}, as
\[
\cM_k(\mR^m, \cC \ell_{0,m})= \left( \cP_k(\mR^m) \otimes \cC \ell_{0,m} \right) \cap \ker{\upx}.
\]
This is a refinement of the space of Clifford algebra-valued spherical harmonics of degree $k$ defined as
\[
\cH_k(\mR^m, \cC \ell_{0,m})= \left( \cP_k(\mR^m) \otimes \cC \ell_{0,m} \right) \cap \ker{\Delta}.
\]
Depending on the situation, we may let the spherical harmonics or monogenics take values in a larger Clifford algebra such as $\cC \ell_{1,m+1}$.


We will need the following technical lemma, which was proven in \cite[Prop 2, p217]{Groen}
\begin{lemma} 
\label{diraclemma}
For $M_{k}(x)\in\cM_k(\mR^m, \cC \ell_{0,m})$ and $g(\rho)$ a scalar function of $\rho=|\underline{x}|$, we have
\begin{align}
\label{diracg}
\upx ( M_{k}(x)g(\rho))&=\underline{x}M_{k}(x)\frac{g'(\rho)}{\rho}\\
\label{diracgx}
\upx (\underline{x}M_{k}(x)g(\rho) )& =-\left((2k+m)M_{k}(x)g(\rho)+ M_{k}(x)\rho g'(\rho)\right).
\end{align}
\end{lemma}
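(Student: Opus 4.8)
The plan is to verify both identities by a direct application of the Leibniz rule to $\upx = \sum_{i=1}^{m} e_i \partial_{x_i}$, combined with a handful of elementary facts: the chain rule $\partial_{x_i} g(\rho) = g'(\rho)\, x_i/\rho$ (since $\partial_{x_i}\rho = x_i/\rho$); the monogenicity $\upx M_k(x) = 0$ and the homogeneity $E M_k(x) = k\, M_k(x)$, where $E = \sum_i x_i \partial_{x_i}$ is the Euler operator; and the Clifford relations $e_i^2 = -1$, $\underline{x}^2 = -|\underline{x}|^2 = -\rho^2$, and $\partial_{x_i}\underline{x} = e_i$. One should keep in mind throughout that $x_i$, $\rho$, $g(\rho)$, $g'(\rho)$ are scalars and commute with everything, whereas $e_i$ in general anticommutes with $\underline{x}$ (indeed $e_i\underline{x} + \underline{x}e_i = -2x_i$, which follows from $e_je_k + e_ke_j = -2\delta_{jk}$) and does not commute with the Clifford-valued $M_k(x)$.

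For \eqref{diracg}, Leibniz gives $\upx(M_k(x)g(\rho)) = (\upx M_k(x))\,g(\rho) + \sum_i e_i M_k(x)\,\partial_{x_i}g(\rho)$. The first summand vanishes by monogenicity. In the remaining sum, the chain rule turns $\partial_{x_i}g(\rho)$ into $g'(\rho)x_i/\rho$; since $x_i$ and $g'(\rho)/\rho$ are scalar they may be moved past $M_k(x)$, and $\sum_i x_i e_i = \underline{x}$, which produces $\underline{x}M_k(x)\,g'(\rho)/\rho$ as claimed.

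For \eqref{diracgx}, the cleanest route is first to record the operator identity $\upx\underline{x} + \underline{x}\,\upx = -m - 2E$ on differentiable functions: from $\partial_{x_i}\underline{x} = e_i$ and Leibniz, $\upx(\underline{x}F) = \sum_i e_i^2 F + \sum_i e_i\underline{x}\,\partial_{x_i}F$, and substituting $e_i\underline{x} = -2x_i - \underline{x}e_i$ converts the last sum into $-2EF - \underline{x}\,\upx F$, while $\sum_i e_i^2 = -m$. Applying this with $F = M_k(x)g(\rho)$, using $E(M_k g) = (EM_k)g + M_k(Eg) = k M_k g + M_k\,\rho g'$ (as $Eg(\rho) = \rho g'(\rho)$), and feeding in \eqref{diracg} together with $\underline{x}\cdot\underline{x}M_k\,g'/\rho = \underline{x}^2 M_k\,g'/\rho = -\rho M_k g'$, the three contributions collapse to $-(2k+m)M_k(x)g(\rho) - M_k(x)\rho g'(\rho)$. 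Alternatively one may bypass the operator identity and expand $\upx(\underline{x}M_k g)$ head-on, which yields precisely the terms $-m M_k g$ (from $\sum_i e_i^2$), $-2k M_k g$ (from $\sum_i e_i\underline{x}(\partial_{x_i}M_k)g$ after applying the anticommutator, $EM_k = kM_k$, $\upx M_k = 0$), and $-\rho g' M_k$ (from $\sum_i e_i\underline{x}M_k(\partial_{x_i}g)$ using $\underline{x}^2 = -\rho^2$).

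The computation is entirely routine; the one place that genuinely demands care is the noncommutative bookkeeping in the term $\sum_i e_i\underline{x}(\partial_{x_i}M_k)g$, where one must rewrite $e_i\underline{x} = -2x_i - \underline{x}e_i$ rather than naively commuting $e_i$ past $\underline{x}$, and then correctly identify the two resulting pieces as $-2$ times the Euler operator applied to $M_k$ and as $-\underline{x}$ times $\upx M_k$. Once this is handled, homogeneity and monogenicity of $M_k$ dispatch the rest.
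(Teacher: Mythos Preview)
Your argument is correct. The Leibniz expansion, the chain rule for $g(\rho)$, the anticommutation $e_i\underline{x}+\underline{x}e_i=-2x_i$, and the identifications $EM_k=kM_k$, $\upx M_k=0$, $\underline{x}^2=-\rho^2$ are exactly what is needed, and your bookkeeping in the noncommutative step is clean. Note, however, that the paper does not supply its own proof of this lemma: it simply cites \cite[Prop.~2, p.~217]{Groen} and moves on. So rather than matching or diverging from the paper's approach, you have provided a self-contained verification of a result the authors take as known from the literature; your proof is in fact the standard one and is essentially what one finds in the cited reference.
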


We then immediately obtain the following corollary.
\begin{corollary}
\label{cortechn}
For any integer $\ell\in\mathbb{N}=\{0,1,2,\ldots\}$ we have
\begin{align}
\label{rhomon}
\upx [\rho^{2\ell}M_{k}(x)] & = 2\ell\rho^{2\ell-2}\underline{x}M_{k}(x)\\
\label{rhoxmon}
\upx [\rho^{2\ell}\underline{x}M_{k}(x)] & = -2(\ell+k+\frac{m}{2})\rho^{2\ell}M_{k}(x)
\end{align}
with $M_{k}(x) \in \cM_k(\mR^m, \cC \ell_{0,m})$.
\end{corollary}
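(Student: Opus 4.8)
The plan is to obtain both identities as the special case $g(\rho) = \rho^{2\ell}$ of Lemma \ref{diraclemma}. Since $\rho^{2\ell}$ is a scalar-valued function of $\rho = |\underline{x}|$, it commutes with everything, so $\rho^{2\ell} M_k(x) = M_k(x)\rho^{2\ell}$ and $\rho^{2\ell}\underline{x}M_k(x) = \underline{x}M_k(x)\rho^{2\ell}$, and the hypotheses of Lemma \ref{diraclemma} apply verbatim.

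For \eqref{rhomon} I would take $g(\rho) = \rho^{2\ell}$ in \eqref{diracg}. Then $g'(\rho) = 2\ell\,\rho^{2\ell-1}$, so $g'(\rho)/\rho = 2\ell\,\rho^{2\ell-2}$, and \eqref{diracg} gives $\upx[\rho^{2\ell}M_k(x)] = \underline{x}M_k(x)\cdot 2\ell\,\rho^{2\ell-2}$, which is the claimed formula after moving the scalar factor to the front. For \eqref{rhoxmon} I would take the same $g$ in \eqref{diracgx}: here $\rho\,g'(\rho) = 2\ell\,\rho^{2\ell}$, so the right-hand side of \eqref{diracgx} becomes $-\big((2k+m)M_k(x)\rho^{2\ell} + 2\ell\, M_k(x)\rho^{2\ell}\big) = -(2k+m+2\ell)\rho^{2\ell}M_k(x)$, and rewriting $2k+m+2\ell = 2(\ell+k+\tfrac{m}{2})$ yields the statement.

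There is no real obstacle here: the corollary is a direct substitution into the two formulas of Lemma \ref{diraclemma}, and the only thing to be careful about is the (trivial) commutation of the scalar radial factor past the Clifford-valued quantities so that $\rho^{2\ell}M_k(x)$ and $M_k(x)g(\rho)$ may be identified. One could also note that the case $\ell = 0$ of \eqref{rhomon} recovers $\upx M_k(x) = 0$, consistent with $M_k \in \cM_k(\mR^m,\cC\ell_{0,m})$, which serves as a sanity check.
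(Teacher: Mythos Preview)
Your proposal is correct and matches the paper's approach: the paper simply states that the corollary is obtained immediately from Lemma~\ref{diraclemma}, which is precisely the substitution $g(\rho)=\rho^{2\ell}$ you carry out. Your remarks on scalar commutation and the $\ell=0$ sanity check are accurate but not needed for the argument.
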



\section{Null-solutions of the parabolic Dirac operator}
\label{Sec3}

Now we will investigate the null-solutions of the parabolic Dirac operator $D_{\underline{x},t}$, i.e. functions $F(x,t)$ satisfying
\[
D_{\underline{x},t}F=0.
\] 
We will do this by means of series expansions of the component functions $F^{[i]}(x,t)$, $i=0, \ldots, 3$. By separating the radial variable $\rho = |\ux|$, the general terms will be of the form
\[
\rho^{2\ell}M_{k}(x)a_{\ell}(t)+\rho^{2\ell}\underline{x}M_{k}(x)b_{\ell}(t),
\] 
where $M_{k}(x) \in \cM_k(\mR^m, \cC \ell_{0,m})$ is a spherical monogenic of degree $k$ and $a_{\ell}(t)$, $b_{\ell}(t)$, $\ell\in\mathbb{N}$, are $\mathcal{C}\ell_{0,m}$-valued functions.

From Corollary \ref{cortechn} we observe the following relations:
\begin{equation}
\label{relations}
\left\{
            \begin{array}{lcl}
\upx (\rho^{2\ell}M_{k}(x)a(t))&=&2\ell\rho^{2\ell-2}\underline{x}M_{k}(x)a(t)\\
\upx (\rho^{2\ell}\underline{x}M_{k}(x)b(t))&=&-2(\ell+k+\frac{m}{2})\rho^{2\ell}M_{k}(x)b(t)\\
\partial_{t}(\rho^{2\ell}M_{k}(x)a(t))&=&\rho^{2\ell}M_{k}(x)a'(t)\\
\partial_{t}(\rho^{2\ell}\underline{x}M_{k}(x)b(t))&=&\rho^{2\ell}\underline{x}M_{k}(x)b'(t).
\end{array}
\right.
\end{equation}
We combine (\ref{relations}) with the necessary and sufficient condition of Theorem \ref{conditionsPDE}. This will give us conditions under which a $\mathcal{C}\ell_{1,m+1}$-valued function of the form
\begin{align}
F(x,t)&=\sum\limits^{\infty}_{\ell=0}
\rho^{2\ell}\left[
(M_{k}(x)a_{\ell}^{0}(t)+\underline{x}M_{k}(x)b^{0}_{\ell}(t))+
\f(M_{k}(x)a_{\ell}^{1}(t)+\underline{x}M_{k}(x)b^{1}_{\ell}(t))\right.\label{2.5}\\
&\phantom{=\sum\limits^{\infty}_{\ell=0}\rho^{2\ell}[(M_{k}}+\left.
\fd(M_{k}(x)a_{\ell}^{2}(t)+\underline{x}M_{k}(x)b^{2}_{\ell}(t))
+\f\fd(M_{k}(x)a_{\ell}^{3}(t)+\underline{x}M_{k}(x)b^{3}_{\ell}(t))
\right]\nonumber
\end{align}
is a null-solution of the parabolic Dirac operator. At this point we assume $F\in \mathcal{C}^{[2,\infty]}$ and absolute convergence of the series in (\ref{2.5}).

The function
\[F^{[0]}(x,t)=\sum\limits_{\ell=0}^{\infty}\rho^{2\ell}(M_{k}(x)a_{\ell}^{0}(t)+\underline{x}M_{k}(x)b_{\ell}^{0}(t))
\]
satisfies the heat equation if and only if
\begin{align*}
(\Delta-\partial_{t})F^{[0]}=&-\upx^{2}F^{[0]}-\partial_{t}F^{[0]}\\
=&-\upx \left[\sum\limits^{\infty}_{\ell=1}2\ell\rho^{2(\ell-1)}\underline{x}M_{k}(x)a^{0}_{\ell}(t)
-\sum\limits^{\infty}_{\ell=0}2\left(\ell+k+\frac{m}{2}\right)\rho^{2\ell}M_{k}(x)b^{0}_{\ell}(t)\right]
\\
&-\sum\limits^{\infty}_{\ell=0}\rho^{2\ell}\left[M_{k}(x)(a^{0}_{\ell})'(t)+\underline{x}M_{k}(x)(b^{0}_{\ell})'(t)\right]\\
=&\sum\limits_{\ell=0}^{\infty}\rho^{2\ell}
\left[4(\ell+1)\left(\ell+k+\frac{m}{2}\right)M_{k}(x)a_{\ell+1}^{0}(t)
+4(\ell+1)\left(\ell+k+\frac{m+2}{2}\right)\underline{x}M_{k}(x)b_{\ell+1}^{0}(t)\right.\\
&\phantom{\sum\limits_{\ell=0}^{\infty}} \left.\phantom{\left(\frac{m}{2}\right)} -M_{k}(x)(a_{\ell}^{0})'(t)
-\underline{x}M_{k}(x)(b_{\ell}^{0})'(t)\right].
\end{align*}
Therefore, the functions $a^{0}_{\ell}(t)$ and $b^{0}_{\ell}(t)$ have to fulfil the recurrence relations
\begin{equation*}
\left\{
\begin{array}{ll}
(a_{\ell}^{0})'(t)=&4(\ell+1)(\ell+k+\frac{m}{2})a_{\ell+1}^{0}(t)\\
(b_{\ell}^{0})'(t)=&4(\ell+1)(\ell+k+\frac{m+2}{2})b_{\ell+1}^{0}(t)
\end{array}
\right.
\end{equation*}
and thus
\begin{equation*}
  \left\{
\begin{array}{ll}
a^{0}_{\ell+1}(t)&=\frac{(a^{0}_{\ell})'(t)}{4(\ell+1)(\ell+k+\frac{m}{2})}\\
              &=\frac{(a^{0}_{\ell-1})^{''}(t)}{4(\ell+1)(\ell+k+\frac{m}{2})4\ell(\ell-1+k+\frac{m}{2})}\\
             & =\cdots\\
              &=\frac{(a^{0}_{0})^{(\ell+1)}(t)}{4^{(\ell+1)}(\ell+1)!(\ell+k+\frac{m}{2})\cdots(0+k+\frac{m}{2})}\\
(b_{\ell+1}^{0})(t)&=\frac{(b^{0}_{0})^{(\ell+1)}(t)}{4^{(\ell+1)}(\ell+1)!(\ell+k+\frac{m+2}{2})\cdots(0+k+\frac{m+2}{2})}
\end{array}
\right.\notag
\end{equation*}
for $\ell\in\mathbb{N}$. Hence, we have
\begin{equation}\label{a0coeff}
\left\{
\begin{array}{ll}
a_{\ell}^{0}(t)=&\frac{(a^{0}_{0})^{(\ell)}(t)}{4^{\ell}\ell!(k+\frac{m}{2})_{\ell}}\\
b_{\ell}^{0}(t)=&\frac{(b^{0}_{0})^{(\ell)}(t)}{4^{\ell}\ell!(k+\frac{m+2}{2})_{\ell}}
\end{array}
\right.
\end{equation}
for  $\ell\in\mathbb{N}$. A similar reasoning shows
\begin{equation}
\label{a2coeff}
\left\{
\begin{array}{ll}
a_{\ell}^{2}(t)=&\frac{(a^{2}_{0})^{(\ell)}(t)}{4^{\ell}\ell!(k+\frac{m}{2})_{\ell}}\\
b_{\ell}^{2}(t)=&\frac{(b^{2}_{0})^{(\ell)}(t)}{4^{\ell}\ell!(k+\frac{m+2}{2})_{\ell}}
\end{array}
\right.
\end{equation}
for  $\ell\in\mathbb{N}$. In (\ref{a0coeff}) and (\ref{a2coeff}) we have used the Pochhammer symbol: $(a)_{0}=1$ and $(a)_{k}=a(a+1)\cdots(a+k-1)$  for $k\in\mathbb{N}\setminus\{0\}$. Summarising, $F^{[0]}$ and $F^{[2]}$ can be written as
\begin{align}
F^{[0]}(x,t)=&\sum\limits_{\ell=0}^{\infty}\rho^{2\ell}(M_{k}(x)a_{\ell}^{0}(t)+\underline{x}M_{k}(x)b_{\ell}^{0}(t))\notag\\
=&\sum\limits_{\ell=0}^{\infty}\frac{\rho^{2\ell}}{4^{\ell}\ell!(k+\frac{m}{2})_{\ell}}M_{k}(x)(a^{0}_{0})^{(\ell)}(t)
+\underline{x}\sum\limits_{\ell=0}^{\infty}\frac{\rho^{2\ell}}{4^{\ell}\ell!(k+\frac{m+2}{2})_{\ell}}M_{k}(x)(b^{0}_{0})^{(\ell)}(t)\notag\\
=&\phantom{|}_{0}F_{1}\left(k+\frac{m}{2};\frac{\rho^{2}s}{4}\right)M_{k}(x)a^{0}_{0}(t)+ {}_{0}F_{1}\left(k+\frac{m+2}{2};\frac{\rho^{2}s}{4}\right) \underline{x} M_{k}(x)b^{0}_{0}(t)
\label{2.9}\end{align}
and
\begin{align}
F^{[2]}(x,t)&={}_{0}F_{1}\left(k+\frac{m}{2};\frac{\rho^{2}s}{4}\right)M_{k}(x)a^{2}_{0}(t)+ {}_{0}F_{1}\left(k+\frac{m+2}{2};\frac{\rho^{2}s}{4}\right) \underline{x} M_{k}(x)b^{2}_{0}(t)
\label{2.10}
\end{align}
where $s=\partial_t$ and ${}_{0}F_{1}(\gamma;x)=\sum_{n=0}^{\infty}\frac{x^{n}}{n!(\gamma)_{n}}$ is a hypergeometric function.
Using Theorem \ref{conditionsPDE} we can generate the remaining components $F^{[1]}$ and $F^{[3]}$. From $F^{[1]}(x,t)=-\upx F^{[0]}(x,t)$ we find
\begin{align*}
\sum\limits_{\ell=0}^{\infty}\rho^{2\ell}(M_{k}(x)a_{\ell}^{1}(t)+\underline{x}M_{k}(x)b_{\ell}^{1}(t))
=&-\upx \left(\sum\limits_{\ell=0}^{\infty}\rho^{2\ell}(M_{k}(x)a_{\ell}^{0}(t)+\underline{x}M_{k}(x)b_{\ell}^{0}(t)\right)\\
=&\sum\limits_{\ell=0}^{\infty}2\left(\ell+k+\frac{m}{2}\right)\rho^{2\ell}M_{k}(x)b_{\ell}^{0}(t)
-\underline{x}\sum\limits_{\ell=0}^{\infty}2(\ell+1)\rho^{2\ell}M_{k}(x)a_{\ell+1}^{0}(t).
\end{align*}
Hence we obtain
\begin{equation}\label{2.11}
\left\{
\begin{array}{ll}
a^{1}_{\ell}(t)=&2(\ell+k+\frac{m}{2})b^{0}_{\ell}(t)=\frac{2(k+\frac{m}{2})}{4^{\ell}\ell!(k+\frac{m}{2})_{\ell}}(b_{0}^{0})^{(\ell)}(t)\\
b^{1}_{\ell}(t)=&-2(\ell+1)a^{0}_{\ell+1}(t)=-\frac{1}{(2k+m)4^{\ell}\ell!(k+\frac{m+2}{2})_{\ell}}(a_{0}^{0})^{(\ell+1)}(t)
\end{array}
\right.
\end{equation}
for $\ell\in\mathbb{N}$. This leads to
\begin{align}\label{2.12}
F^{[1]}(x,t)=&\sum\limits_{\ell=0}^{\infty}\rho^{2\ell}(M_{k}(x)a_{\ell}^{1}(t)+\underline{x}M_{k}(x)b_{\ell}^{1}(t))\notag\\
            =&(2k+m)_{0}F_{1}\left(k+\frac{m}{2};\frac{\rho^{2}s}{4}\right)M_{k}(x)b^{0}_{0}(t)
            -\frac{\underline{x}}{(2k+m)}{}_{0}F_{1}\left(k+\frac{m+2}{2};\frac{\rho^{2}s}{4}\right)sM_{k}(x)a^{0}_{0}(t).
\end{align}
From $F^{[3]}=\upx F^{[2]}(x,t)-F^{[0]}(x,t)$, we have
\begin{align*}
\sum\limits_{\ell=0}^{\infty}\rho^{2\ell}\left(M_{k}(x)a_{\ell}^{3}(t)+\underline{x}M_{k}(x)b_{\ell}^{3}(t)\right)
=&\phantom{|}\upx \left(\sum\limits_{\ell=0}^{\infty}\rho^{2\ell}\left(M_{k}(x)(a_{\ell}^{2}(t)+\underline{x}M_{k}(x)b_{\ell}^{2}(t)\right)\right)\\
&-\sum\limits_{\ell=0}^{\infty}\rho^{2\ell}\left(M_{k}(x)(a_{\ell}^{0}(t)+\underline{x}M_{k}(x)b_{\ell}^{0}(t)\right)\\
=&\sum\limits_{\ell=0}^{\infty}\rho^{2\ell}\left(-2\left(\ell+k+\frac{m}{2}\right)M_{k}(x)b_{\ell}^{2}(t)-M_{k}(x)a_{\ell}^{0}(t)\right.\\
&\left.\phantom{\left(\frac{m}{2}\right)}+\underline{x}\left[2(\ell+1)M_{k}(x)a_{\ell+1}^{2}(t)-M_{k}(x)b_{\ell}^{0}(t)\right]\right).
\end{align*}
We obtain
\begin{equation}\label{2.13}
\left\{
\begin{array}{ll}
a^{3}_{\ell}(t)=&-2(\ell+k+\frac{m}{2})b^{2}_{\ell}(t)-a^{0}_{\ell}(t)\\
b^{3}_{\ell}(t)=&2(\ell+1)a^{2}_{\ell+1}(t)-b^{0}_{\ell}(t)
\end{array}
\right.
\end{equation}
for $\ell\in\mathbb{N}$. Substituting (\ref{a0coeff}) and (\ref{a2coeff}) into (\ref{2.13}), yields
\begin{equation}\label{2.14}
\left\{
\begin{array}{ll}
a^{3}_{\ell}(t)&=-\frac{1}{4^{\ell}\ell!(k+\frac{m}{2})_{\ell}}[(2k+m)(b_{0}^{2})^{(\ell)}(t)+(a_{0}^{0})^{(\ell)}(t)]\\
b^{3}_{\ell}(t)&=\frac{1}{4^{\ell}\ell!(k+\frac{m+2}{2})_{\ell}}[\frac{(a_{0}^{2})^{(\ell+1)}(t)}{2k+m}-(b_{0}^{0})^{(\ell)}(t)]
\end{array}
\right.
\end{equation}
This leads to
\begin{align}\label{2.15}
F^{[3]}(x,t)=&\phantom{|}_{0}F_{1}\left(k+\frac{m}{2};\frac{\rho^{2}s}{4}\right)\left(-(2k+m)M_{k}(x)b^{2}_{0}(t)-M_{k}(x)a^{0}_{0}(t)\right)\notag\\
&+ {}_{0}F_{1}\left(k+\frac{m+2}{2};\frac{\rho^{2}s}{4}\right)\underline{x}
\left(\frac{s}{2k+m}M_{k}(x)a^{2}_{0}(t)-M_{k}(x)b^{0}_{0}(t)\right).
\end{align}
Substituting (\ref{2.9}), (\ref{2.10}), (\ref{2.12}) and (\ref{2.15}) into (\ref{2.1}), we get the following null-solutions of the parabolic Dirac operator $D_{\ux,t}$ by means of series expansions with general term $\rho^{2\ell}[M_{k}(x)a_{\ell}(t)+\underline{x}M_{k}(x)b_{\ell}(t)]$:
\begin{align}\label{2.16}
F(x,t)=&F^{[0]}(x,t)+\f F^{[1]}(x,t)+\fd F^{[2]}(x,t)+\f\fd F^{[3]}(x,t)\\
=&{}_{0}F_{1}\left(k+\frac{m}{2};\frac{\rho^{2}s}{4}\right)\left[(1-\f \fd)M_{k}(x)a^{0}_{0}(t)
+\fd M_{k}(x)a^{2}_{0}(t)\right.\notag\\
&\phantom{{}_{0}F_{1}\left(k+\frac{m}{2};\frac{\rho^{2}s}{4}\right)qdf}\left.+(2k+m)(\f M_{k}(x)b^{0}_{0}(t)-\f\fd M_{k}(x)b^{2}_{0}(t))\right]\notag\\
&+ {}_{0}F_{1}\left(k+\frac{m+2}{2};\frac{\rho^{2}s}{4}\right) \ux \left[(1-\f\fd)M_{k}(x)b^{0}_{0}(t)-\fd M_{k}(x)b^{2}_{0}(t)\phantom{\frac{s}{(2k+m)}}\right.\notag\\
&\phantom{{}_{0}F_{1}\left(k+\frac{m+2}{2};\frac{\rho^{2}s}{4}\right) \ux qdqds}\left.+\frac{s}{(2k+m)}(\f M_{k}(x)a^{0}_{0}(t)+\f\fd M_{k}(x)a^{2}_{0}(t))\right]\notag.
\end{align}

It is now possible to formulate the following theorem, which presents the solution (\ref{2.16}) in a much more concise way.
\begin{theorem}
\label{thmPara}
The general solution of the parabolic Dirac equation $D_{\underline{x},t} F(x,t)=0$ of type (\ref{2.5}) is given by
\begin{align}\label{2.17}
F(x,t)=\left[1-\left(\f+\frac{1}{s}\fd\right)\upx \right]{}_0F_1\left(k+\frac{m}2;\frac{\rho^2s}{4}\right)M_{k}(x)a(t).
\end{align}
where $s = \partial_t$, $M_{k}(x) \in \cM_k(\mR^m, \cC \ell_{0,m})$ is a spherical monogenic of degree $k$ and $a(t)$ is a $\mathcal{C}\ell_{1,m+1}$-valued function.
\end{theorem}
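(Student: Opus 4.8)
The plan is to verify directly that (\ref{2.17}) solves $D_{\ux,t}F=0$ and then to identify it with the general solution (\ref{2.16}) already obtained; the only tool needed is a single identity describing the action of $\upx$ on a ${}_{0}F_{1}$-series.

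Write $g_{\gamma}:={}_{0}F_{1}(\gamma;\tfrac{\rho^{2}s}{4})=\sum_{\ell\geq0}\frac{\rho^{2\ell}s^{\ell}}{4^{\ell}\ell!(\gamma)_{\ell}}$ with $s=\partial_{t}$. Applying Corollary \ref{cortechn} termwise and using the Pochhammer shifts $(k+\tfrac m2)_{\ell+1}=(k+\tfrac m2)(k+\tfrac{m+2}{2})_{\ell}$ and $\tfrac{\ell+k+m/2}{(k+(m+2)/2)_{\ell}}=\tfrac{k+m/2}{(k+m/2)_{\ell}}$, one obtains, for any $\cC\ell_{1,m+1}$-valued $a(t)$,
\begin{align*}
\upx\bigl[g_{k+\frac m2}M_{k}(x)a(t)\bigr]&=\tfrac{s}{2k+m}\,g_{k+\frac{m+2}{2}}\,\ux M_{k}(x)a(t),\\
\upx\bigl[g_{k+\frac{m+2}{2}}\,\ux M_{k}(x)a(t)\bigr]&=-(2k+m)\,g_{k+\frac m2}M_{k}(x)a(t).
\end{align*}
Composing the two lines gives $\upx^{2}G=-\partial_{t}G$, i.e. $(\Delta-\partial_{t})G=0$, where $G:=g_{k+\frac m2}M_{k}(x)a(t)$ is exactly the function acted on in (\ref{2.17}).

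Next I would check that (\ref{2.17}) is a null solution. Writing $F=G-(\f+s^{-1}\fd)\upx G$ and applying $D_{\ux,t}=\upx+\f\partial_{t}+\fd$, one uses that $\upx$ anticommutes with $\f$ and with $\fd$ (because $\f e_{i}+e_{i}\f=\fd e_{i}+e_{i}\fd=0$), that $\partial_{t}$ and $s^{-1}$ commute with everything, the heat equation $\Delta G=\partial_{t}G$, and the nilpotent relations (\ref{nilps}) ($\f^{2}=\fdd=0$, $\f\fd+\fd\f=1$). A short computation then shows that $\upx F$, $\f\partial_{t}F$ and $\fd F$ cancel in pairs, so $D_{\ux,t}F=0$; hence (\ref{2.17}) is a solution of type (\ref{2.5}) for every $a(t)$.

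It remains to see that (\ref{2.17}) is the \emph{general} such solution, which I would do by matching it against (\ref{2.16}). Substituting $\upx G=\tfrac1{2k+m}g_{k+\frac{m+2}{2}}\ux M_{k}(x)\partial_{t}a$ into (\ref{2.17}), decomposing $a=a^{[0]}+\f a^{[1]}+\fd a^{[2]}+\f\fd a^{[3]}$ with the $a^{[i]}$ being $\cC\ell_{0,m}$-valued, and moving the nilpotents to the left through $\ux$ and $M_{k}(x)$ by means of (\ref{nilps}), one brings $F$ into the shape of (\ref{2.16}). Reading off the $\rho^{0}$-terms then expresses the free data $a_{0}^{0},b_{0}^{0},a_{0}^{2},b_{0}^{2}$ of (\ref{2.16}) in terms of $a^{[0]},a^{[1]},a^{[2]},a^{[3]}$; for $M_{k}$ of pure parity this dictionary takes the transparent, manifestly invertible form $a^{[0]}=a_{0}^{0}$, $a^{[1]}=\pm(2k+m)b_{0}^{0}$, $a^{[2]}=\pm a_{0}^{2}$, $a^{[3]}=-a_{0}^{0}-(2k+m)b_{0}^{2}$, and in general one verifies the correspondence remains bijective, so that (\ref{2.17}) indeed exhausts the solutions (\ref{2.16}). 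The step I expect to be the main obstacle is precisely this bookkeeping: keeping track of the signs and of the occurrence of the main involution $M_{k}^{\ast}$ (again a degree-$k$ monogenic, since $\ast$ preserves $\ker\upx$, and satisfying $M_{k}(x)\f=\f M_{k}^{\ast}(x)$ and $M_{k}^{\ast}(x)\f=\f M_{k}(x)$) when commuting $\f,\fd$ past $\ux$ and $M_{k}(x)$, and then confirming that the resulting linear map on parameters is onto.
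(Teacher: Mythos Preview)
Your proposal is correct and close in spirit to the paper's proof, but the two parts are organised differently and it is worth noting the contrast.

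For the first half, the paper does \emph{not} verify $D_{\ux,t}F=0$ directly. Instead it chooses $a_{0}^{0}=a$, $b_{0}^{2}=-a/(2k+m)$, $a_{0}^{2}=b_{0}^{0}=0$ in the already-established solution (\ref{2.16}) and simplifies to (\ref{2.17}); the fact that (\ref{2.17}) is a solution then comes for free from the derivation of (\ref{2.16}). Your route---computing $\upx G$, $\upx(\ux\,{}_{0}F_{1}\,M_{k}a)$, deducing the heat equation for $G$, and then cancelling $D_{\ux,t}F$ term by term using (\ref{nilps}) and the anticommutation of $\upx$ with $\f,\fd$---is a legitimate self-contained alternative that does not rely on (\ref{2.16}) having been worked out first. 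Both arguments use the same key identity $\upx\bigl[{}_{0}F_{1}(k+\tfrac{m}{2};\tfrac{\rho^{2}s}{4})M_{k}a\bigr]=\tfrac{s}{2k+m}\,{}_{0}F_{1}(k+\tfrac{m+2}{2};\tfrac{\rho^{2}s}{4})\,\ux M_{k}a$.

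For the second half (generality), your plan and the paper's are dual to one another. You decompose $a=a^{[0]}+\f a^{[1]}+\fd a^{[2]}+\f\fd a^{[3]}$, push the nilpotents to the left through $M_{k}$ and $\ux$, and read off the dictionary with $(a_{0}^{0},b_{0}^{0},a_{0}^{2},b_{0}^{2})$. The paper instead multiplies (\ref{2.18}) on the \emph{right} by the four elements $1-\f\fd$, $\fd$, $(2k+m)\f$, $-\f\fd$ and observes that each picks out exactly one of the four free-function coefficients in (\ref{2.16}). The right-multiplication trick is a bit slicker precisely because it sidesteps most of the $M_{k}\leftrightarrow M_{k}^{\ast}$ bookkeeping you flag as the main obstacle: one only has to commute a single $\cC\ell_{1,1}$-element past $M_{k}$, and identities like $M_{k}(1-\f\fd)=(1-\f\fd)M_{k}$ and $M_{k}^{\ast}\f=\f M_{k}$ make the answers land with the correct monogenic. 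Your approach reaches the same conclusion but with more sign-tracking; the invertibility of the parameter map you describe is indeed what is needed, and it holds.
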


\begin{proof}
By taking $a^0_0(t)=a(t)$, $b^2_0=-a(t)/(2k+m)$ and $a^2_0=b^0_0=0$ in (\ref{2.16}), the null-solution for the parabolic Dirac operator can be expressed as
\begin{align}\label{2.18}
F(x,t)=&\phantom{|}_{0}F_{1}\left(k+\frac{m}{2};\frac{\rho^{2}s}{4}\right)\left[(1-\f\fd)M_{k}(x)a(t)+\f\fd M_{k}(x)a(t)\right]\notag\\
   &+{}_{0}F_{1}\left(k+\frac{m+2}{2};\frac{\rho^{2}s}{4}\right) \ux \left[\frac{1}{2k+m}\fd M_{k}(x)a(t)+\frac{s}{2k+m}\f M_{k}(x)a(t)\right]\notag\\
   =&\phantom{|}_{0}F_{1}\left(k+\frac{m}{2};\frac{\rho^{2}s}{4}\right)M_{k}(x)a(t)+{}_{0}F_{1}\left(k+\frac{m+2}{2};\frac{\rho^{2}s}{4}\right) \ux \left[\frac{\fd +s \f}{2k+m}M_{k}(x)a(t)\right]\notag\\
   =&\left[{}_{0}F_{1}\left(k+\frac{m}{2};\frac{\rho^{2}s}{4}\right)+{}_{0}F_{1}\left(k+\frac{m+2}{2};\frac{\rho^{2}s}{4}\right) \ux \frac{\fd +s \f}{2k+m}\right]M_{k}(x)a(t).
\end{align}
Since
\[
\upx \left({}_0F_1\left(k+\frac{m}2;\frac{\rho^2s}{4}\right)M_{k}(x)a(t)\right)=\frac{\underline{x}s}{k+2m}{}_0F_1\left(k+\frac{m+2}2;\frac{\rho^2s}{4}\right)M_{k}(x)a(t),
\]
we have
\begin{align*}
F(x,t)=&\phantom{|}_{0}F_{1}\left(k+\frac{m}{2};\frac{\rho^{2}s}{4}\right)M_{k}(x)a(t)-\f\upx \left( {}_{0}F_{1}\left(k+\frac{m}{2};\frac{\rho^{2}s}{4}\right)M_{k}(x)a(t)\right)\\
&-\frac{\fd }{s}\upx \left( {}_{0}F_{1}(k+\frac{m}{2};\frac{\rho^{2}s}{4})M_{k}(x)a(t)\right)\\
=&\left[1-\left(\f+\frac1s\fd\right)\upx \right]\left({}_0F_1\left(k+\frac{m}2;\frac{\rho^2s}{4}\right)M_{k}(x)a(t)\right).
\end{align*}
Hence for suitable choices of the input functions, (\ref{2.16}) indeed reduces to (\ref{2.17}).

Now we show that any solution of the form (\ref{2.17}) is of the type (\ref{2.16}). If $a(t)$ is not restricted to $\cC\ell_{0,m}$ but allowed to vary in $\cC\ell_{1,m+1}$, we verify that
\begin{align*}
 & \left({}_{0}F_{1}\left(k+\frac{m}{2};\frac{\rho^{2}s}{4}\right)+\frac{\underline {x}}{2k+m}(s\f+\fd ){}_{0}F_{1}\left(k+\frac{m+2}{2};\frac{\rho^{2}s}{4}\right)\right)M_{k}(x)(1-\f\fd )\\
 & =\left({}_{0}F_{1}\left(k+\frac{m}{2};\frac{\rho^{2}s}{4}\right)(1-\f\fd )+\frac{\underline{x}sf}{2k+m}
  {}_{0}F_{1}\left(k+\frac{m+2}{2};\frac{\rho^{2}s}{4}\right)\right)M_{k}(x),
\end{align*}
which is the coefficient of $a^{0}_{0}(t)$ in the general solution (\ref{2.16}). Similarly, the coefficient of $a^{2}_{0}(t)$ is obtained by multiplying (\ref{2.18}) to the right by $\fd $. Analogously, multiplying to the right by $(m+2k)\f$ and  $-\f\fd $ gives the coefficient of $b^{0}_{0}(t)$ and $b^{2}_{0}(t)$ respectively.
\end{proof}


\section{Solutions of the generalised parabolic Dirac operator}\label{Sec4}

Definition \ref{def-com} inspires us to define a more general class of Dirac operators as follows. Introduce the following element of $\cC \ell_{1,1}$:
\[
\zeta=a\f \fd+ b\f+c \fd+d \fd \f
\]
where $a, b, c, d$ are complex constants. 

\begin{definition}
\label{def-com2}
We define the generalised parabolic Dirac operator as
$$ \gpde$$
where $\zeta=a\f \fd+ b\f+c \fd+d \fd \f$ with $a, b, c, d\in \mC$.
\end{definition}

As far as we are aware, the operator of Definition \ref{def-com2} has not been studied before.

We state some direct observations. Let $\zeta$ be as above, then it can be
represented by $\left(\begin{array}{cc} a&b\\c&d\end{array}\right)$. Its main involution is given by
\[
\zeta^{\ast} = a\f\fd - b\f - c\fd + d\fd\f
\]
This is represented by the matrix
\[
\left(\begin{matrix}
a & -b\\
-c & d
\end{matrix}\right) = \left(\begin{matrix}
1 & 0\\
0 & -1
\end{matrix}\right) \left(\begin{matrix}
a & b\\
c & d
\end{matrix}\right) \left(\begin{matrix}
1 & 0\\
0 & -1
\end{matrix}\right)  
\]
or in terms of the Clifford elements:
\[
\zeta^{\ast} = (\f\fd - \fd\f)\zeta(\f\fd - \fd\f)
\]
so that $\zeta^{\ast}\zeta=\xi^2$ where

\[
\xi =(\f\fd-\fd\f)\zeta=a\f\fd + b\f - c\fd - d\fd\f.
\]
Thus $\xi$ is represented by $A=\left(\begin{array}{cc}
    a & b\\-c & -d\end{array}\right)$.
       
We are looking for solutions of the generalised Dirac equations, i.e. functions $g$ that satisfy 
\begin{equation}\label{genDireq}
(\upx+\zeta)g=0.
\end{equation}
Consequently, they also satisfy $(-\upx+\zeta^{\ast})(\upx+\zeta)g=0$, hence they are solutions of the generalised Helmholtz equation
\begin{equation}\label{genHelmeq}
(\Delta+\zeta^{\ast}\zeta)g=0.
\end{equation}
By separating the radial variable $\rho=|\ux|$, we search for solutions of the form $h_k(\rho)H_k(x)$ where $H_k\in\mathcal{H}_k(\mR^m,\cC \ell_{0,m})$ and $h_k(\rho)$ is an entire function.

\begin{proposition}
The general solution of the generalised Helmholtz equation (\ref{genHelmeq}) as a series of harmonic polynomials is given by

\[
g(x)=\sum_{k=0}^{+\infty} {}_0F_1\left(\frac{m}{2}+k;-\frac{1}{4}\zeta^{\ast}\zeta\rho^2\right)  H_k(x).
\]
\end{proposition}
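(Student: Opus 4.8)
The plan is to plug a single separated term $h_k(\rho)H_k(x)$ into the generalised Helmholtz equation (\ref{genHelmeq}), reduce it to a second-order radial ODE whose coefficients act by left multiplication, solve that ODE by the (even) power-series ansatz that entireness forces, recognise the resulting coefficients as those of a ${}_0F_1$, and finally sum over $k$, using the spherical-harmonic decomposition of solutions of (\ref{genHelmeq}) to see that this is the \emph{general} solution.

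The first ingredient is the elementary identity
\[
\Delta\bigl(h_k(\rho)H_k(x)\bigr)=\Bigl(h_k''(\rho)+\frac{m-1+2k}{\rho}\,h_k'(\rho)\Bigr)H_k(x),
\qquad H_k\in\cH_k(\mR^m,\cC\ell_{0,m}),
\]
valid for any $\cC\ell_{1,m+1}$-valued radial factor $h_k$; it follows from $\Delta H_k=0$, Euler's relation $\sum_i x_i\partial_{x_i}H_k=kH_k$, and $\Delta h_k(\rho)=h_k''+\frac{m-1}{\rho}h_k'$ applied componentwise. Since $\zeta^{\ast}\zeta\in\cC\ell_{1,1}$ acts on $g$ by left multiplication and $\rho$ is a scalar, the equation $(\Delta+\zeta^{\ast}\zeta)(h_kH_k)=0$ collapses to the radial ODE
\[
h_k''+\frac{m-1+2k}{\rho}\,h_k'+\zeta^{\ast}\zeta\,h_k=0 .
\]

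Next I would solve this ODE with $h_k(\rho)=\sum_{j\ge0}c_j\rho^{2j}$, $c_j\in\cC\ell_{1,1}$: the indicial exponents at $\rho=0$ are $0$ and $2-m-2k$, and requiring $h_k$ to be entire (smooth on all of $\mR^m$) discards the second branch and leaves only even powers. Substituting yields the recursion $2(j+1)(2j+m+2k)\,c_{j+1}=-\zeta^{\ast}\zeta\,c_j$; since $m\ge1$, $k\ge0$, the factor $2(j+1)(2j+m+2k)$ is never zero, so
\[
c_j=\frac{1}{j!\,(\tfrac m2+k)_j}\Bigl(-\tfrac14\zeta^{\ast}\zeta\Bigr)^{j}c_0,
\qquad\text{i.e.}\qquad
h_k(\rho)={}_0F_1\!\Bigl(\tfrac m2+k;\,-\tfrac14\zeta^{\ast}\zeta\,\rho^2\Bigr)c_0,
\]
where the right side is read as the series $\sum_j\frac{(-\frac14\zeta^{\ast}\zeta\rho^2)^j}{j!(\frac m2+k)_j}$ in the Clifford element $\zeta^{\ast}\zeta$ (equivalently in the matrix $A^2$), which converges in $\cC\ell_{1,1}$ for every $\rho$ because ${}_0F_1$ is entire; hence $h_k$ is entire. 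Taking $c_0=1$ for each $k$ and summing over $k$ (with the same convergence proviso used in Section \ref{Sec3}) gives the claimed $g$.

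For completeness I would argue that every solution of (\ref{genHelmeq}) arises this way: separation of variables expands it into terms $h_k(\rho)\mathcal{H}_k(x)$ with $\mathcal{H}_k$ a harmonic of degree $k$ taking values in $\cC\ell_{1,m+1}$ (cf.\ the remark in Section \ref{Sec2}), each $h_k$ solves the radial ODE above, and by the exponent analysis its entire solutions form exactly the right $\cC\ell_{1,m+1}$-module generated by ${}_0F_1(\tfrac m2+k;-\tfrac14\zeta^{\ast}\zeta\rho^2)$, so after recombining each term has the stated shape. The ODE manipulation is routine; the two points that need care are that $\zeta^{\ast}\zeta$ is \emph{not} central, so it must consistently be kept to the left and "${}_0F_1$ of a Clifford element" must be understood as a left-multiplication operator, and the convergence of the $k$-sum, which is handled exactly as in the parabolic case of Section \ref{Sec3}.
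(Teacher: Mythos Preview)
Your proposal is correct and follows essentially the same route as the paper: separate variables as $h_k(\rho)H_k(x)$, derive the radial ODE $h_k''+\frac{m-1+2k}{\rho}h_k'+\zeta^{\ast}\zeta\,h_k=0$, and solve it by a power-series ansatz to recognise the ${}_0F_1$. The only variation is that the paper computes the Laplacian via $\Delta=-\upx^{\,2}$ and the Clifford Leibniz rule (which brings in the main involution), whereas you use the scalar product rule for $\Delta$ directly; your added remarks on the indicial exponents and on keeping $\zeta^{\ast}\zeta$ on the left are apt but not present in the paper's argument.
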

\begin{proof}
 By change of variables we find for a radial, $\mathcal{C}_{1,1}$-valued, entire function $f(\rho)$:

\[
\partial_{\ul{x}} f(\rho) = \frac{1}{\rho}\ux f'(\rho) = \left(f'(\rho)\right)^{\ast}\frac{1}{\rho}\ux.
\]
If we now use $\ux\partial_{\ul{x}} + \partial_{\ul{x}}\ux = -2\mathbb{E} - m$, where $\mathbb{E} = \sum_{i=1}^m x_i \partial_{x_i}$ is the Euler operator, we get

\begin{align*}
\Delta \left[h_k(\rho)H_k(x)\right] =& -\partial_{\ul{x}}^2 \left[h_k(\rho)H_k(x)\right]\\
=& -\partial_{\ul{x}} \left[(\partial_{\ul{x}} h_k(\rho)) H_k(x) + (h_k(\rho))^{\ast} (\partial_{\ul{x}} H_k(x))\right]\\
=& -\partial_{\ul{x}} \left[\frac{(h_k'(\rho)^{\ast}}{\rho} \ux H_k(x) + (h_k(\rho))^{\ast} (\partial_{\ul{x}} H_k(x))\right]\\
=& \left(\frac{h_k''(\rho)}{\rho} - \frac{h_k'(\rho)}{\rho^2}\right)\left(-\frac{\ux^2}{\rho}\right)H_k(x) + \frac{h_k'(\rho)}{\rho} (2\mathbb{E} + m)H_k\\
=& \left(h_k''(\rho) - \frac{h_k'(\rho)}{\rho}\right)H_k(x) + (2k+m)\frac{h_k'(\rho)}{\rho} H_k.
\end{align*}
Hence $(\Delta+\zeta^{\ast}\zeta)\left[h_k(\rho)H_k(x)\right]=0$ if $h_k(\rho)$ satisfies $\displaystyle \frac{d^2h_k}{d\rho^2}+\frac{m+2k-1}{\rho}\frac{dh_k}{d\rho}+\zeta^{\ast}\zeta h_k=0$. Writing $h_k(\rho) = \sum_{n=0}^{\infty} \theta_n \rho^n$, we obtain:

\begin{align*}
0 &= \frac{d^2h_k}{d\rho^2}+\frac{m+2k-1}{\rho}\frac{dh_k}{d\rho}+\zeta^{\ast}\zeta h_k\\
&= \sum_{n=2}^{\infty} \theta_n n (n-1) \rho^{n-2} + \frac{m+2k-1}{\rho} \sum_{n=1}^{\infty} \theta_n n \rho^{n-1} + \zeta^{\ast}\zeta \sum_{n=0}^{\infty} \theta_n \rho^n\\
&= \sum_{n=0}^{\infty} (\theta_{n+2}(n+2)(n+m+2k) + \zeta^{\ast} \zeta \theta_n) \rho^n + \theta_1 (m+2k-1) \frac{1}{\rho}.
\end{align*}
This implies

\[
\left\{\begin{array}{l}
\theta_{n+2}(n+2)(n+m+2k) = - \zeta^{\ast} \zeta \theta_n\\
\theta_1 = 0
\end{array}\right.
\]
whence  $h_k(\rho)={}_0F_1\left(\frac{m}{2}+k;-\frac{1}{4}\zeta^{\ast}\zeta\rho^2\right)$. Hence a general solution in terms of harmonic polynomials is given by

\begin{equation}\label{monogenic building blocks genpardirac}
g(x)=\sum_{k=0}^{+\infty} {}_0F_1\left(\frac{m}{2}+k;-\frac{1}{4}\zeta^{\ast}\zeta\rho^2\right)  H_k(x).
\end{equation}

\end{proof}

It is a well-known fact, see \cite[Theorem 4.4, p138]{Perspective}, that each harmonic polynomial has a monogenic decomposition $H_k(x) = M_k(x) + \ux \widetilde{M}_{k-1}(x)$, with $M_k(x)\in\cM_k(\mR^m, \cC \ell_{0,m})$ and $\widetilde{M}_{k-1}(x) \in \cM_{k-1}(\mR^m, \cC \ell_{0,m})$. Thus in terms of monogenic polynomials, the building blocks of solutions to (\ref{genHelmeq}) are ${}_0F_1\left(\frac{m}{2}+k;-\frac{1}{4}\zeta^{\ast}\zeta\rho^2\right)M_k(x)$ and ${}_0F_1\left(\frac{m}{2}+k+1;-\frac{1}{4}\zeta^{\ast}\zeta\rho^2\right)\ux\widetilde{M}_k(x)$, where the $M_k$ and $\widetilde{M}_k$ are monogenic and homogeneous of degree $k$. We are left with considering linear combinations of the form

\[
g(x) = \sum_{k=0}^{+\infty}\left[{}_0F_1\left(\frac{m}{2}+k;-\frac{1}{4}\zeta^{\ast}\zeta\rho^2\right)M_k(x) + {}_0F_1\left(\frac{m}{2}+k+1;-\frac{1}{4}\zeta^{\ast}\zeta\rho^2\right)x\widetilde{M}_k(x)\right].
\]
as potential solutions for the generalised parabolic Dirac equation.

\begin{proposition}\label{Solution generalized Dirac mono}
A general solution of the generalised parabolic Dirac equation (\ref{genDireq}) expressed as a series of monogenic polynomials is given by

\[
g(x) = \sum_{k=0}^{+\infty}\left[{}_0F_1\left(\frac{m}{2}+k;-\frac{1}{4}\zeta^{\ast}\zeta\rho^2\right) + \frac{{}_0F_1\left(\frac{m}{2}+k+1;-\frac{1}{4}\zeta^{\ast}\zeta\rho^2\right)\ux\zeta}{m+2k}\right]M_k(x).
\] 

\end{proposition}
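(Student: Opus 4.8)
The plan is to verify directly that the proposed series satisfies $(\upx+\zeta)g=0$, working term by term. Since we already know from the previous proposition that the building blocks ${}_0F_1(\tfrac m2+k;-\tfrac14\zeta^{\ast}\zeta\rho^2)M_k(x)$ and ${}_0F_1(\tfrac m2+k+1;-\tfrac14\zeta^{\ast}\zeta\rho^2)\ux\widetilde M_k(x)$ span all solutions of the generalised Helmholtz equation, and every solution of (\ref{genDireq}) is such a solution, it suffices to determine which linear combinations $a(t)$-free constants $\widetilde M_k$ (in terms of $M_k$) actually kill $\upx+\zeta$. So the first step is to pick one term $g_k(x) = {}_0F_1(\tfrac m2+k;\tfrac{-\zeta^{\ast}\zeta\rho^2}{4})M_k(x) + {}_0F_1(\tfrac m2+k+1;\tfrac{-\zeta^{\ast}\zeta\rho^2}{4})\ux\widetilde M_k(x)$ and compute $\upx g_k$ using Lemma \ref{diraclemma} (or rather its radial-function consequences): $\upx$ applied to a monogenic times a radial function produces $\ux M_k$ times a rescaled radial function, and $\upx$ applied to $\ux M_k$ times a radial function produces $M_k$ times another rescaled radial function, with the factor $2k+m$ appearing. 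One must be careful that the radial functions here are $\cC\ell_{1,1}$-valued, so the order of multiplication matters; this is exactly the subtlety flagged in the displayed computation $\partial_{\ux}f(\rho)=(f'(\rho))^{\ast}\tfrac1\rho\ux$ just above, and I would handle it the same way.

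Next I would use the hypergeometric contiguity identity that relates ${}_0F_1(\gamma;z)$ and ${}_0F_1(\gamma+1;z)$ under differentiation, namely $\tfrac{d}{dz}{}_0F_1(\gamma;z)=\tfrac1\gamma{}_0F_1(\gamma+1;z)$ and the three-term relation ${}_0F_1(\gamma;z)={}_0F_1(\gamma+1;z)+\tfrac{z}{\gamma(\gamma+1)}{}_0F_1(\gamma+2;z)$, to re-express the radial factors produced by $\upx$ so that the coefficient of $M_k(x)$ and of $\ux\widetilde M_k(x)$ in $\upx g_k$ are each again combinations of ${}_0F_1(\tfrac m2+k;\cdot)$ and ${}_0F_1(\tfrac m2+k+1;\cdot)$. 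Then $\zeta g_k$ is immediate: $\zeta$ acts by left multiplication and commutes past $\ux$ (up to replacing $\zeta$ by nothing, since $\ux$ is built from $e_i$'s which anticommute with nothing in $\cC\ell_{1,1}$ — actually $\ux$ commutes with $\f,\fd$? no: $\f e_j + e_j\f=0$, so $\ux$ anticommutes with $\f$ and $\fd$, hence $\zeta\ux = \ux\zeta^{\ast}$ up to the involution; I will need to track this carefully). Collecting the coefficients of ${}_0F_1(\tfrac m2+k;\cdot)M_k(x)$, of ${}_0F_1(\tfrac m2+k+1;\cdot)\ux M_k(x)$, etc., and setting them to zero gives a linear relation forcing $\widetilde M_k = \tfrac{\zeta M_k}{m+2k}$ (reading off from the proposed answer), and one checks the remaining coefficient vanishes identically because of the relation $\zeta^{\ast}\zeta = \xi^2$ and the structure of ${}_0F_1$.

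The main obstacle will be the noncommutativity bookkeeping: $\zeta\in\cC\ell_{1,1}$ does not commute with $\ux$, and $\zeta^{\ast}\zeta$ (appearing inside the hypergeometric functions as a matrix-like coefficient) does not commute with $\zeta$ in general, so one cannot naively treat the ${}_0F_1$'s as scalars. The key algebraic facts that make it work are $\zeta^{\ast}\zeta=\xi^2$ with $\xi=(\f\fd-\fd\f)\zeta$, and the anticommutation $\ux\f=-\f\ux$, $\ux\fd=-\fd\ux$, which together imply $\ux\,{}_0F_1(\gamma;-\tfrac14\zeta^{\ast}\zeta\rho^2) = {}_0F_1(\gamma;-\tfrac14\xi^{\ast}\xi\rho^2)\,\ux$ or a similar swapped identity — I would prove the needed commutation lemma for $\ux$ past a power series in $\zeta^{\ast}\zeta$ first, as a separate small step, and then the rest is the contiguity-relation algebra. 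A secondary point to handle is justifying term-by-term differentiation of the series, which follows from the assumed absolute convergence / entireness exactly as in the analogous step of Theorem \ref{thmPara} and the preceding proposition.

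Alternatively — and this may be cleaner — I would mimic the proof of Theorem \ref{thmPara}: write $g_k = [\,1 + (\text{operator})\upx\,]\,{}_0F_1(\tfrac m2+k;-\tfrac14\zeta^{\ast}\zeta\rho^2)M_k(x)$ for a suitable right-multiplication-by-$\zeta$-type operator, apply $\upx+\zeta$, use that ${}_0F_1(\tfrac m2+k;-\tfrac14\zeta^{\ast}\zeta\rho^2)M_k(x)$ solves the Helmholtz equation so $\upx^2$ on it is $-\zeta^{\ast}\zeta = -\xi^2$ times it, and let the identity collapse. Whichever route, the proof is a finite computation once the commutation lemma for $\ux$ past functions of $\zeta^{\ast}\zeta$ is in hand, so I would lead with that lemma.
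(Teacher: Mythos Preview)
Your primary approach is essentially the paper's: take a general Helmholtz building block ${}_0F_1(\tfrac{m}{2}+k;\cdot)M_k + {}_0F_1(\tfrac{m}{2}+k+1;\cdot)\,\ux\,\widetilde M_k$, apply $\upx+\zeta$ term by term, and read off the constraint $\widetilde M_k=\tfrac{\zeta}{2k+m}M_k$. The paper executes this by expanding the ${}_0F_1$'s as explicit power series in $\ux$ and comparing coefficients of each power (rather than via contiguity relations), and it handles the noncommutativity exactly as you anticipate---using the identity $\upx f(\rho)=(f'(\rho))^{\ast}\tfrac{1}{\rho}\ux$ from the preceding proposition, which is precisely your ``commutation lemma'' $\ux\,(\zeta^{\ast}\zeta)^n=(\zeta\zeta^{\ast})^n\,\ux$ in disguise.
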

\begin{proof}
 Applying the Dirac operator to the building blocks (\ref{monogenic building blocks genpardirac}) in terms of monogenic polynomials, yields

\begin{align}
\upx \left({}_0F_1\left(\frac{m}{2}+k;-\frac{1}{4}\zeta^{\ast}\zeta\rho^2\right)M_k(x)\right) &= -2\sum_{n=1}^{\infty} \left(\frac{\zeta\zeta^{\ast}}{4}\right)^n \frac{\ux^{2n-1}}{(k+\frac{m}{2})_n (n-1)!} M_k(x)\nonumber\\
&=\frac{\zeta\zeta^{\ast}}{m+2k}{}_0F_1\left(\frac{m}{2}+k+1;-\frac{1}{4}\zeta\zeta^{\ast} \rho^2\right)\ux M_k(x) \label{Dirac operator of 0F1(m/2+k)}\\
\upx \left({}_0F_1\left(\frac{m}{2}+k+1;-\frac{1}{4}\zeta^{\ast}\zeta\rho^2\right)\ux\widetilde{M}_k(x)\right) &= \sum_{n=0}^{\infty} \left(\frac{\zeta\zeta^{\ast}}{4}\right)^n \frac{(-2n-2k-m)\ux^{2n}}{(k+1+\frac{m}{2})_n n!} \widetilde{M}_k(x) \nonumber
\end{align}
Requiring that $(\partial_{\ul{x}}+\zeta)g=0$ yields the following equation for each $k$:

\begin{align*}
0=&(\gpde)\left[{}_0F_1\left(\frac{m}{2}+k;-\frac{1}{4}\zeta^{\ast}\zeta\rho^2\right)M_k(x) + {}_0F_1\left(\frac{m}{2}+k+1;-\frac14\zeta^{\ast}\zeta\rho^2\right)\ux\widetilde{M}_k(x)\right] \\
=& \sum_{n=1}^{\infty} \left(\frac{\zeta\zeta^{\ast}}{4}\right)^n \frac{(-2n)\ux^{2n-1}}{(k+\frac{m}{2})_n n!}M_k(x) + \sum_{n=0}^{\infty} \left(\frac{\zeta\zeta^{\ast}}{4}\right)^n \frac{(-2n-2k-m)\ux^{2n}}{(k+1+\frac{m}{2})_n n!}\widetilde{M}_k(x)\\
&+\sum_{n=0}^{\infty} \zeta\left(\frac{\zeta^{\ast}\zeta}{4}\right)^n \frac{\ux^{2n}}{(k+\frac{m}{2})_n n!}M_k(x) + \sum_{n=0}^{\infty} \zeta\left(\frac{\zeta^{\ast}\zeta}{4}\right)^n \frac{\ux^{2n+1}}{(k+1+\frac{m}{2})_n n!}\widetilde{M}_k(x)\\
=& \sum_{n=0}^{\infty} \left(\frac{\zeta\zeta^{\ast}}{4}\right)^{n+1} \frac{(-2)\ux^{2n+1}}{(k+\frac{m}{2})_{n+1} n!}M_k + \sum_{n=0}^{\infty} \left(\frac{\zeta\zeta^{\ast}}{4}\right)^n \frac{(-2n-2k-m)\ux^{2n}}{(k+1+\frac{m}{2})_n n!}\widetilde{M}_k(x)\\
&+\sum_{n=0}^{\infty} \left(\frac{\zeta\zeta^{\ast}}{4}\right)^n \frac{\ux^{2n}}{(k+\frac{m}{2})_n n!}\zeta M_k + \sum_{n=0}^{\infty} \left(\frac{\zeta\zeta^{\ast}}{4}\right)^n \frac{\ux^{2n+1}}{(k+1+\frac{m}{2})_n n!}\zeta^{\ast}\widetilde{M}_k(x).
\end{align*}
By comparing the factors of each power of $\ux$ we get

\[
\left\{ \begin{array}{l}
\dfrac{\zeta}{(k+\frac{m}{2})_n} M_k(x) = \dfrac{2(n+k+\frac{m}{2})}{(k+1+\frac{m}{2})_n}\widetilde{M}_k(x)\\[0.4cm]
\dfrac{\zeta^{\ast}}{(k+1+\frac{m}{2})_n} \widetilde{M}_k(x) = \dfrac{\zeta^{\ast}\zeta}{2(k+\frac{m}{2})_{n+1}}M_k(x)
\end{array}\right.
\]
which yields

\[
\widetilde{M}_k(x) = \dfrac{\zeta}{2k+m} M_k(x).
\]
This allows to finally express the solutions as 
         
\[
g(x) = \sum_{k=0}^{+\infty}\left[{}_0F_1\left(\frac{m}{2}+k;-\frac{1}{4}\zeta^{\ast}\zeta\rho^2\right) + \frac{{}_0F_1\left(\frac{m}{2}+k+1;-\frac14\zeta^{\ast}\zeta\rho^2\right)\ux\zeta}{m+2k}\right]M_k(x).
\] 

\end{proof}

Rewriting Proposition \ref{Solution generalized Dirac mono} yields the following theorem:

\begin{theorem}\label{thmGenPara}
The solution of the generalised parabolic Dirac equation (\ref{genDireq}) as a series of spherical monogenics is given by

\[
g(x)=\sum_{k=0}^{+\infty}(\zeta^{\ast}-\upx)\left[{\displaystyle \dfrac{\left({}_0F_1\left(\dfrac{m}{2}+k+1,-\dfrac{1}{4}\zeta^{\ast}\zeta\rho^2\right)\right)^{\ast}}{m+2k}}\ux M_k(x)\right]
\]
\end{theorem}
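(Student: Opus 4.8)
The plan is to start from the closed form obtained in Proposition \ref{Solution generalized Dirac mono},
\[
g(x)=\sum_{k=0}^{+\infty}\left[{}_0F_1\left(\tfrac{m}{2}+k;-\tfrac14\zeta^{\ast}\zeta\rho^2\right)+\frac{{}_0F_1\left(\tfrac{m}{2}+k+1;-\tfrac14\zeta^{\ast}\zeta\rho^2\right)\ux\zeta}{m+2k}\right]M_k(x),
\]
and to recognise each summand as $(\zeta^{\ast}-\upx)$ applied to a single hypergeometric building block. So the first step is to compute, for fixed $k$,
\[
(\zeta^{\ast}-\upx)\left[\frac{\bigl({}_0F_1(\tfrac{m}{2}+k+1;-\tfrac14\zeta^{\ast}\zeta\rho^2)\bigr)^{\ast}}{m+2k}\,\ux M_k(x)\right].
\]
The $\zeta^{\ast}$-part: since ${}_0F_1$ is a power series in $\zeta^{\ast}\zeta$, taking the main involution turns $\zeta^{\ast}\zeta$ into $\zeta\zeta^{\ast}$, and multiplying on the left by $\zeta^{\ast}$ gives $\zeta^{\ast}\,{}_0F_1(\cdots;\zeta\zeta^{\ast})=\,{}_0F_1(\cdots;\zeta^{\ast}\zeta)\,\zeta^{\ast}$; combined with $\zeta^{\ast}\ux=\ux\zeta^{\ast}$ (spatial Dirac data commutes with the $\cC\ell_{1,1}$-part) and $\zeta^{\ast}\ux\zeta = -\ux\zeta^{\ast}\zeta$ is not needed — rather one keeps $\zeta^{\ast}\ux M_k = \ux\zeta^{\ast}M_k$, so this term contributes $\tfrac{1}{m+2k}{}_0F_1(\tfrac m2+k+1;-\tfrac14\zeta^{\ast}\zeta\rho^2)\ux\zeta^{\ast}M_k$. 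Wait — one must be careful: $\zeta^{\ast}M_k$ is not obviously $M_k$ times anything, so instead I will use the identity $\bigl({}_0F_1(\gamma;-\tfrac14\zeta^{\ast}\zeta\rho^2)\bigr)^{\ast}=\,{}_0F_1(\gamma;-\tfrac14\zeta\zeta^{\ast}\rho^2)$ and carry the Clifford factors through honestly.

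The $\upx$-part is the real computation: apply Lemma \ref{diraclemma}/Corollary \ref{cortechn} term by term to $\upx\bigl[\rho^{2\ell}\ux M_k(x)\bigr]=-2(\ell+k+\tfrac m2)\rho^{2\ell}M_k(x)$, exactly as in \eqref{Dirac operator of 0F1(m/2+k)}; this converts $\upx\bigl[{}_0F_1(\tfrac m2+k+1;-\tfrac14\zeta\zeta^{\ast}\rho^2)\ux M_k\bigr]$ into a ${}_0F_1(\tfrac m2+k;\cdot)$ series times $M_k$, with a shift in the Pochhammer index producing the factor $-(m+2k)$ that cancels the denominator. Collecting the two contributions I expect precisely the bracketed expression $\bigl[{}_0F_1(\tfrac m2+k;\cdot)+\tfrac{1}{m+2k}{}_0F_1(\tfrac m2+k+1;\cdot)\ux\zeta\bigr]M_k(x)$ from Proposition \ref{Solution generalized Dirac mono}, using $\zeta^{\ast}\zeta=\xi^2$ and the relation $\zeta^{\ast}\ux\zeta$-versus-$\ux\zeta^{\ast}\zeta$ to match the middle term. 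Summing over $k$ then gives the claimed formula.

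The main obstacle is bookkeeping the noncommutativity: $\upx$ anticommutes with $\ux$ but acts only on the $x$-variables, while $\zeta,\zeta^{\ast}\in\cC\ell_{1,1}$ commute with everything built from $e_1,\dots,e_m$; one must track on which side of $M_k(x)$ each scalar-in-$\cC\ell_{1,1}$ coefficient sits, and in particular use $\bigl(h(\rho)\bigr)^{\ast}$ only where the involution genuinely intervenes (it acts trivially on even elements of $\cC\ell_{1,1}$ such as $\f\fd$, but nontrivially on $\f,\fd$ individually, which is why $\zeta^{\ast}\zeta$ and $\zeta\zeta^{\ast}$ must be kept distinct). A secondary check is the convergence/termwise-differentiation justification for the series, which is routine given that each ${}_0F_1$ is entire and the argument involves the fixed nilpotent-ish element $\zeta^{\ast}\zeta$. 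Once the per-$k$ identity is verified, the theorem follows immediately by linearity.
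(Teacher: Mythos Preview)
Your plan is the same as the paper's --- start from Proposition~\ref{Solution generalized Dirac mono} and identify each summand as $(\zeta^{\ast}-\upx)$ acting on a single hypergeometric block --- but the execution contains a genuine algebraic error that would break the computation. You assert that ``$\zeta,\zeta^{\ast}\in\cC\ell_{1,1}$ commute with everything built from $e_1,\dots,e_m$'' and in particular that $\zeta^{\ast}\ux=\ux\zeta^{\ast}$. This is false: the relations in \eqref{nilps} say $\f e_j=-e_j\f$ and $\fd e_j=-e_j\fd$, so the odd part of $\zeta$ anticommutes with $\ux$. The correct identity is
\[
\zeta\,\ux=\ux\,\zeta^{\ast},\qquad \zeta^{\ast}\,\ux=\ux\,\zeta,
\]
i.e.\ conjugation by $\ux$ implements the main involution on $\cC\ell_{1,1}$. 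This is exactly why $(\,\cdot\,)^{\ast}$ appears in the statement of the theorem.

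With the wrong commutation your $\zeta^{\ast}$-contribution becomes $\tfrac{1}{m+2k}\,{}_0F_1(\tfrac m2+k+1;-\tfrac14\zeta^{\ast}\zeta\rho^2)\,\ux\,\zeta^{\ast}M_k$, which does \emph{not} match the $B$-term $\tfrac{1}{m+2k}\,{}_0F_1(\tfrac m2+k+1;-\tfrac14\zeta^{\ast}\zeta\rho^2)\,\ux\,\zeta\,M_k$ from Proposition~\ref{Solution generalized Dirac mono}. Using the correct rule one gets, after the intertwining $\zeta^{\ast}(\zeta\zeta^{\ast})^n=(\zeta^{\ast}\zeta)^n\zeta^{\ast}$ that you already noted,
\[
\zeta^{\ast}\Bigl[{}_0F_1\bigl(\tfrac m2+k+1;-\tfrac14\zeta\zeta^{\ast}\rho^2\bigr)\,\ux\,M_k\Bigr]
={}_0F_1\bigl(\tfrac m2+k+1;-\tfrac14\zeta^{\ast}\zeta\rho^2\bigr)\,\zeta^{\ast}\ux\,M_k
={}_0F_1\bigl(\tfrac m2+k+1;-\tfrac14\zeta^{\ast}\zeta\rho^2\bigr)\,\ux\,\zeta\,M_k,
\]
which is $(m+2k)B\,M_k$ on the nose. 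The same swap $\upx(\zeta\zeta^{\ast})^n=(\zeta^{\ast}\zeta)^n\upx$ is what makes your $\upx$-part land in ${}_0F_1(\tfrac m2+k;-\tfrac14\zeta^{\ast}\zeta\rho^2)M_k$; you gesture at this via \eqref{Dirac operator of 0F1(m/2+k)}, but it is worth making explicit since it is the very mechanism you denied in the $\zeta^{\ast}$-part. Your ``Wait'' correctly flags that something is off, but the issue is the commutation rule, not whether $\zeta^{\ast}M_k$ simplifies. Fix that single relation and the rest of your outline goes through exactly as in the paper.
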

\begin{proof}
Let $g(x)$ be of the form found in Proposition \ref{Solution generalized Dirac mono}, i.e.
\[
g(x) = \sum_{k=0}^{+\infty}\left[\underbrace{{}_0F_1\left(\frac{m}{2}+k;-\frac{1}{4}\zeta^{\ast}\zeta\rho^2\right)}_{:=A} + \underbrace{\frac{{}_0F_1\left(\frac{m}{2}+k+1;-\frac{1}{4}\zeta^{\ast}\zeta\rho^2\right)\ux\zeta}{m+2k}}_{:=B}\right]M_k(x).
\] 
Note that

\[
\partial_{\ul{x}} \left(\frac{{}_0F_1\left(\frac{m}{2}+k+1;-\frac{1}{4}\zeta\zeta^{\ast}\rho^2\right)}{m+2k} \ux M_k(x)\right) = -{}_0F_1\left(\frac{m}{2}+k;-\frac{1}{4}\zeta^{\ast}\zeta\rho^2\right)M_k(x).
\]
Hence $A = -\upx\left(\frac{{}_0F_1\left(\frac{m}{2}+k+1;-\frac{1}{4}\zeta\zeta^{\ast}\rho^2\right)}{m+2k} x M_k(x)\right)$. The term $B$ on the other hand, can be rewritten as

\[
B = \zeta^{\ast}\frac{{}_0F_1\left(\frac{m}{2}+k+1;-\frac{1}{4}\zeta\zeta^{\ast}\rho^2\right)}{m+2k} \ux.
\]
Combining these results yields

\[
g(x)=\sum_{k=0}^{+\infty}(\zeta^{\ast}-\partial_x)\left[{\displaystyle \dfrac{\left({}_0F_1\left(\dfrac{m}{2}+k+1;-\dfrac{1}{4}\zeta^{\ast}\zeta\rho^2\right)\right)^{\ast}}{m+2k}}\ux M_k(x)\right].
\]
\end{proof}

For invertible $\zeta$, we can rework Proposition \ref{Solution generalized Dirac mono} as follows.

\begin{theorem}\label{thmGenPara zeta invertible}
If $\zeta$ is invertible, then

\[
g(x)=\sum_{k=0}^{+\infty}(1-\zeta^{-1}\upx)\left[{}_0F_1\left(\frac{m}{2}+k;-\frac{1}{4}\zeta^{\ast}\zeta\rho^2\right)M_k(x)\right]
\]
\end{theorem}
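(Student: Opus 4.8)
The goal is to massage the expression from Proposition~\ref{Solution generalized Dirac mono}, namely
\[
g(x)=\sum_{k=0}^{+\infty}\left[{}_0F_1\left(\tfrac{m}{2}+k;-\tfrac{1}{4}\zeta^{\ast}\zeta\rho^2\right) + \frac{{}_0F_1\left(\tfrac{m}{2}+k+1;-\tfrac14\zeta^{\ast}\zeta\rho^2\right)\ux\zeta}{m+2k}\right]M_k(x),
\]
into the factored form $\sum_k (1-\zeta^{-1}\upx)\big[{}_0F_1(\tfrac m2+k;-\tfrac14\zeta^{\ast}\zeta\rho^2)M_k(x)\big]$. So term-by-term it suffices to show that
\[
-\zeta^{-1}\upx\left({}_0F_1\left(\tfrac{m}{2}+k;-\tfrac{1}{4}\zeta^{\ast}\zeta\rho^2\right)M_k(x)\right)
=\frac{{}_0F_1\left(\tfrac{m}{2}+k+1;-\tfrac14\zeta^{\ast}\zeta\rho^2\right)\ux\zeta}{m+2k}\,M_k(x).
\]

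First I would invoke equation~(\ref{Dirac operator of 0F1(m/2+k)}), which was established in the proof of Proposition~\ref{Solution generalized Dirac mono}:
\[
\upx \left({}_0F_1\left(\tfrac{m}{2}+k;-\tfrac{1}{4}\zeta^{\ast}\zeta\rho^2\right)M_k(x)\right) =\frac{\zeta\zeta^{\ast}}{m+2k}\,{}_0F_1\left(\tfrac{m}{2}+k+1;-\tfrac{1}{4}\zeta\zeta^{\ast} \rho^2\right)\ux M_k(x).
\]
Multiplying on the left by $-\zeta^{-1}$ gives $-\dfrac{\zeta^{\ast}}{m+2k}\,{}_0F_1(\tfrac m2+k+1;-\tfrac14\zeta\zeta^{\ast}\rho^2)\,\ux M_k(x)$, so what remains is purely an identity about moving $\zeta^{\ast}$ past the hypergeometric function. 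Here one must be careful: the argument of ${}_0F_1$ in (\ref{Dirac operator of 0F1(m/2+k)}) involves $\zeta\zeta^{\ast}$, whereas the target expression involves $\zeta^{\ast}\zeta$. The key algebraic step is the intertwining relation $\zeta^{\ast}(\zeta\zeta^{\ast})^n=(\zeta^{\ast}\zeta)^n\zeta^{\ast}$, valid for every $n$, which term-by-term in the power series yields $\zeta^{\ast}\,{}_0F_1(\gamma;-\tfrac14\zeta\zeta^{\ast}\rho^2)={}_0F_1(\gamma;-\tfrac14\zeta^{\ast}\zeta\rho^2)\,\zeta^{\ast}$. Since $\ux$ commutes with the scalar $\rho$ and with any element of $\cC\ell_{1,1}$ (as $\ux\in\cC\ell_{0,m}$ and by (\ref{nilps})), one then also needs $\zeta^{\ast}\ux=\ux\zeta^{\ast}$ to slide $\ux$ to the left of the remaining $\zeta^{\ast}$. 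Putting these together,
\[
-\zeta^{-1}\upx\left({}_0F_1\left(\tfrac{m}{2}+k;-\tfrac{1}{4}\zeta^{\ast}\zeta\rho^2\right)M_k(x)\right)=-\frac{{}_0F_1\left(\tfrac{m}{2}+k+1;-\tfrac14\zeta^{\ast}\zeta\rho^2\right)\ux\zeta^{\ast}}{m+2k}M_k(x).
\]

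This is almost the claimed form, except with $\zeta^{\ast}$ in place of $\zeta$, with no overall sign change. I would reconcile this by recalling the relation $\zeta^{\ast}=(\f\fd-\fd\f)\zeta(\f\fd-\fd\f)$ from the preliminaries of Section~\ref{Sec4}, together with the fact that $\ux(\f\fd-\fd\f)=(\f\fd-\fd\f)\ux$ (again since $\ux\in\cC\ell_{0,m}$); hence $\ux\zeta^{\ast}=(\f\fd-\fd\f)\ux\zeta(\f\fd-\fd\f)$. Acting by the conjugation $(\f\fd-\fd\f)(\cdot)(\f\fd-\fd\f)$ on the whole term then produces exactly the summand of Proposition~\ref{Solution generalized Dirac mono}, provided the ${}_0F_1$ factor is also conjugation-invariant; and since $\zeta^{\ast}\zeta$ is fixed by this conjugation (one checks $(\f\fd-\fd\f)\zeta^{\ast}\zeta(\f\fd-\fd\f)=\zeta\zeta^{\ast}$... ), this will require being attentive to which of $\zeta\zeta^{\ast}$, $\zeta^{\ast}\zeta$ is actually invariant. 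I expect the main obstacle to be precisely this bookkeeping of $\zeta\zeta^{\ast}$ versus $\zeta^{\ast}\zeta$ and the placement of $\zeta$ versus $\zeta^{\ast}$: the two squares are generally distinct noncommuting elements, and one must verify that the substitution $\widetilde M_k=\tfrac{\zeta}{2k+m}M_k$ derived in Proposition~\ref{Solution generalized Dirac mono} is compatible with pulling the factor $\zeta^{-1}$ all the way out front. A cleaner route, avoiding the involution entirely, may be to start instead from Theorem~\ref{thmGenPara} and simply apply $\zeta^{-1}$ on the left: writing $(\zeta^{\ast}-\upx)=\zeta^{\ast}(1-(\zeta^{\ast})^{-1}\upx)$ and using the same intertwining $(\zeta^{\ast})^{-1}\,{}_0F_1(\gamma;-\tfrac14\zeta^{\ast}\zeta\rho^2)={}_0F_1(\gamma;-\tfrac14\zeta\zeta^{\ast}\rho^2)(\zeta^{\ast})^{-1}$, together with $({}_0F_1(\gamma;-\tfrac14\zeta^{\ast}\zeta\rho^2))^{\ast}={}_0F_1(\gamma;-\tfrac14\zeta^{\ast}\zeta\rho^2)$ since $\zeta^{\ast}\zeta=\xi^2$ is even and self-starred — this collapses Theorem~\ref{thmGenPara} directly to the stated form when $\zeta$ is invertible, and I would present the argument that way if the signs work out more transparently.
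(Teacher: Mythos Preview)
Your overall strategy is precisely the paper's: invoke (\ref{Dirac operator of 0F1(m/2+k)}), left-multiply by $\zeta^{-1}$, and push $\zeta^{\ast}$ to the right through the power series via $\zeta^{\ast}(\zeta\zeta^{\ast})^n=(\zeta^{\ast}\zeta)^n\zeta^{\ast}$. The genuine gap is the claim that ``$\ux$ commutes with \dots any element of $\cC\ell_{1,1}$'' and hence $\zeta^{\ast}\ux=\ux\zeta^{\ast}$. Relation~(\ref{nilps}) says $\f e_j+e_j\f=0=\fd e_j+e_j\fd$, so $\ux$ \emph{anti}-commutes with $\f$ and with $\fd$; it commutes only with the even elements $\f\fd$ and $\fd\f$. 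For any $\eta=a\f\fd+b\f+c\fd+d\fd\f\in\cC\ell_{1,1}$ one therefore has $\eta\,\ux=\ux\,\eta^{\ast}$: sliding $\ux$ through implements the main involution. In particular $\zeta^{\ast}\ux=\ux\,\zeta$, not $\ux\,\zeta^{\ast}$.

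With the correct identity $\zeta^{\ast}\ux=\ux\zeta$ your computation lands directly on
\[
\frac{{}_0F_1\bigl(\tfrac m2+k+1;-\tfrac14\zeta^{\ast}\zeta\rho^2\bigr)\,\ux\,\zeta}{m+2k}\,M_k(x),
\]
which is exactly the second summand of Proposition~\ref{Solution generalized Dirac mono}, and the proof is finished; this is verbatim the paper's two-line argument. The detour through conjugation by $\f\fd-\fd\f$ and the alternative approach via Theorem~\ref{thmGenPara} are artifacts of the mistaken commutation and can be discarded entirely. (As for the residual sign you flagged: a direct term-by-term check shows that the displayed form of (\ref{Dirac operator of 0F1(m/2+k)}) should carry an overall minus sign; with that correction and the identity $\zeta^{\ast}\ux=\ux\zeta$, the signs match Proposition~\ref{Solution generalized Dirac mono} on the nose.)
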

\begin{proof}
Recall the result obtained in (\ref{Dirac operator of 0F1(m/2+k)})
\[
\upx \left({}_0F_1\left(\frac{m}{2}+k;-\frac{1}{4}\zeta^{\ast}\zeta\rho^2\right)M_k(x)\right) =\frac{\zeta\zeta^{\ast}}{m+2k}\phantom{|}_0F_1\left(\frac{m}{2}+k+1;-\frac{1}{4}\zeta\zeta^{\ast} \rho^2\right)\ux M_k(x)
\]
Thus if $\zeta$ is invertible we have

\begin{align*}
\zeta^{-1}\upx \left({}_0F_1\left(\frac{m}{2}+k;-\frac{1}{4}\zeta^{\ast}\zeta\rho^2\right)M_k(x)\right) &= \frac{\zeta^{\ast}}{m+2k}\phantom{|}_0F_1\left(\frac{m}{2}+k+1;-\frac{1}{4}\zeta\zeta^{\ast} \rho^2\right)\ux M_k(x)\\
&=\frac{{}_0F_1\left(\frac{m}{2}+k+1;-\frac14\zeta^{\ast}\zeta\rho^2\right)\ux\zeta}{m+2k} M_k(x)
\end{align*}
which proves the result.
\end{proof}

At the beginning of this section, we introduced a matrix representation for both $\zeta$ and $\xi$. In doing so, we can interpret the previous two results using Sylvester's formula (see \cite{Sylvester}):
    
\begin{lemma}	
Let $f$ be an analytic function and let $B$ be a diagonalisable matrix with distinct eigenvalues $\lambda_i$, $i = 1,\ldots,k$. Then

\[
f(B) = \sum_{i=1}^k f(\lambda_i) B_i
\]
where

\[
B_i = \prod_{\substack{j=1\\ j\neq i}}^k \frac{1}{\lambda_i - \lambda_j} (B-\lambda_j I)
\]
with $I$ the identity matrix.
\end{lemma}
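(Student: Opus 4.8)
The plan is to recognise the matrices $B_i$ as Lagrange interpolation polynomials evaluated at $B$, and then to reduce the analytic functional calculus to polynomial interpolation on the spectrum. First I would introduce the scalar polynomials
\[
p_i(x) = \prod_{\substack{j=1\\ j\neq i}}^k \frac{x-\lambda_j}{\lambda_i-\lambda_j}, \qquad i = 1,\ldots,k,
\]
each of degree $k-1$, so that by definition $B_i = p_i(B)$. These satisfy the interpolation identities $p_i(\lambda_j) = \delta_{ij}$, which is the only structural fact about them that will be needed.

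Next I would exploit diagonalisability. Writing $B = SDS^{-1}$ with $D$ diagonal carrying the eigenvalues, one has for any analytic $f$ (convergent power series on a neighbourhood of $\{\lambda_1,\ldots,\lambda_k\}$) that $f(B) = S f(D) S^{-1}$, where $f(D)$ is the diagonal matrix with entries $f(\lambda_i)$; in particular $f(B)$ depends only on the values $f(\lambda_1),\ldots,f(\lambda_k)$. Now set $L(x) = \sum_{i=1}^k f(\lambda_i)\, p_i(x)$, the Lagrange interpolating polynomial, so that $L(\lambda_i) = f(\lambda_i)$ for every $i$. Then $f(D) = L(D)$ entrywise, hence
\[
f(B) = S\, L(D)\, S^{-1} = L(B) = \sum_{i=1}^k f(\lambda_i)\, p_i(B) = \sum_{i=1}^k f(\lambda_i)\, B_i,
\]
which is the claimed formula.

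The only point needing care — and the main, though mild, obstacle — is the passage from the analytic function $f$ to its polynomial interpolant: one must justify that $f(B)$ is well defined and coincides with $L(B)$. Since $B$ is diagonalisable with the $k$ \emph{distinct} eigenvalues $\lambda_1,\ldots,\lambda_k$, its minimal polynomial is $\mu(x) = \prod_{i=1}^k (x-\lambda_i)$, which has only simple roots; because $f$ and $L$ agree at every root of $\mu$, the functions $f$ and $L$ agree modulo $\mu$, and since $\mu(B)=0$ this yields $f(B) = L(B)$. (Alternatively one invokes the holomorphic functional calculus directly, or checks that the $B_i$ are the spectral projectors of $B$ — i.e. $B_iB_j = \delta_{ij}B_i$, $\sum_i B_i = I$, $BB_i = \lambda_iB_i$ — and applies $f(B) = \sum_i f(\lambda_i)B_i$ to the spectral decomposition.) Everything else reduces to the routine verification that the $p_i$ are the genuine Lagrange polynomials for the nodes $\lambda_1,\ldots,\lambda_k$.
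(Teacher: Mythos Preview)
Your argument is correct and is the standard Lagrange interpolation proof of Sylvester's formula. Note, however, that the paper does not supply its own proof of this lemma: it is stated as a known result with a citation to Higham's \emph{Functions of Matrices}, so there is no ``paper's approach'' to compare against. Your derivation via the minimal polynomial (or equivalently via the spectral projectors) is exactly what one finds in that reference.
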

    
\begin{proposition}\label{prop evaluating in eigenvalues}
Let $\lambda_\pm=\frac{1}{2}\left(a-d\pm\sqrt{(a+d)^2-4bc}\right)$ be the eigenvalues of the matrix $A$ representing $\xi$ and let $\psi$ be an entire function, then
    
\[
\psi(\zeta^{\ast}\zeta) = \left\{ \begin{array}{ll}
      					  \displaystyle \frac{\xi-\lambda_-}{\lambda_+-\lambda_-}\psi(\lambda_+^2) + \frac{\xi-\lambda_+}{\lambda_--\lambda_+}\psi(\lambda_-^2), & \text{if }\lambda_+\neq\lambda_-,\\
      					  ~\\
      					  \psi(\lambda^2) + 2\lambda\psi'(\lambda^2)(\xi-\lambda), & \text{if }\lambda_+ = \lambda_- = \lambda. \end{array}
    \right.
\]
\end{proposition}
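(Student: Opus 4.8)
The plan is to reduce the statement to Sylvester's formula applied to a concrete $2\times 2$ matrix. The key observation is that $\psi(\zeta^{\ast}\zeta)$ should be interpreted through the matrix representation introduced earlier: since $\zeta^{\ast}\zeta = \xi^2$ and $\xi$ is represented by $A = \left(\begin{smallmatrix} a & b\\ -c & -d\end{smallmatrix}\right)$, we have that $\zeta^{\ast}\zeta$ is represented by $A^2$. Hence $\psi(\zeta^{\ast}\zeta)$ is represented by $\psi(A^2)$, and the function $z\mapsto \psi(z^2)$ is entire, so it makes sense to write $\psi(A^2) = (\psi\circ \mathrm{sq})(A)$ where $\mathrm{sq}(z)=z^2$. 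First I would record that the eigenvalues of $A$ are exactly $\lambda_\pm = \tfrac12\bigl(a-d\pm\sqrt{(a+d)^2-4bc}\bigr)$: indeed the characteristic polynomial of $A$ is $\lambda^2 - (\mathrm{tr}\,A)\lambda + \det A$ with $\mathrm{tr}\,A = a-d$ and $\det A = -ad+bc$, so the discriminant is $(a-d)^2 + 4(ad-bc) = (a+d)^2 - 4bc$, matching the stated formula.

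Next, in the generic case $\lambda_+\neq\lambda_-$, I would apply Sylvester's formula (the Lemma just stated) to the analytic function $f(z) = \psi(z^2)$ and the diagonalisable matrix $B = A$ with the two distinct eigenvalues $\lambda_+,\lambda_-$. This gives directly
\[
\psi(A^2) = f(\lambda_+)\,A_+ + f(\lambda_-)\,A_- = \psi(\lambda_+^2)\,\frac{A-\lambda_- I}{\lambda_+-\lambda_-} + \psi(\lambda_-^2)\,\frac{A-\lambda_+ I}{\lambda_--\lambda_+},
\]
and translating back to Clifford language (replacing $A$ by $\xi$ and $\lambda_\pm I$ by the scalars $\lambda_\pm$) yields exactly the first branch of the claimed formula. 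One small point to check is that $A$ is diagonalisable when $\lambda_+\neq\lambda_-$, which is automatic for a $2\times 2$ matrix with distinct eigenvalues, so the hypothesis of the Lemma is met.

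For the degenerate case $\lambda_+ = \lambda_- = \lambda$, Sylvester's formula as stated does not apply, so I would instead use the holomorphic functional calculus (or equivalently the Taylor expansion of $f$ around $\lambda$ together with the fact that $(A-\lambda I)^2 = 0$ when $A$ is a $2\times 2$ matrix with a single eigenvalue of algebraic multiplicity $2$ — note $A$ need not be diagonalisable here, but $(A-\lambda I)^2=0$ holds by Cayley--Hamilton). Writing $f(z) = f(\lambda) + f'(\lambda)(z-\lambda) + \tfrac{f''(\lambda)}{2}(z-\lambda)^2 + \cdots$ and substituting $z = A$, every term beyond the linear one vanishes because it contains a factor $(A-\lambda I)^2 = 0$, so $f(A) = f(\lambda) I + f'(\lambda)(A-\lambda I)$. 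With $f(z) = \psi(z^2)$ we have $f(\lambda) = \psi(\lambda^2)$ and $f'(\lambda) = 2\lambda\psi'(\lambda^2)$ by the chain rule, giving $\psi(A^2) = \psi(\lambda^2) + 2\lambda\psi'(\lambda^2)(\xi-\lambda)$, which is the second branch.

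The main obstacle I anticipate is purely expository rather than mathematical: making precise and rigorous the identification ``$\psi(\zeta^{\ast}\zeta)$ is computed by substituting $\zeta^{\ast}\zeta = \xi^2$ and diagonalising'', i.e. justifying that the functional calculus on the subalgebra $\cC\ell_{1,1}$ generated by $\f,\fd$ is faithfully modelled by the $2\times 2$ matrix representation $\zeta \leftrightarrow \left(\begin{smallmatrix} a&b\\c&d\end{smallmatrix}\right)$, and that the entire function $\psi$ applied to $\zeta^{\ast}\zeta$ (appearing inside the ${}_0F_1$ series of the previous theorems) respects this representation term by term. Once that bookkeeping is in place, both cases follow immediately from Sylvester's formula and the nilpotency $(A-\lambda I)^2=0$, so the remainder of the argument is a routine verification of the chain-rule identity $\tfrac{d}{dz}\psi(z^2) = 2z\psi'(z^2)$ evaluated at the eigenvalues.
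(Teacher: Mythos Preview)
Your proof is correct and, for the case $\lambda_+\neq\lambda_-$, identical to the paper's: both simply invoke Sylvester's formula for the entire function $f(z)=\psi(z^2)$ applied to the $2\times 2$ matrix $A$ representing $\xi$.

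In the degenerate case $\lambda_+=\lambda_-=\lambda$ your route differs slightly from the paper's. The paper splits into two subcases: if $A$ is diagonalisable it must equal $\lambda I$ (so $\xi=\lambda$ and the formula is trivial), and if $A$ is not diagonalisable the paper passes to the Jordan form $A=Q\left(\begin{smallmatrix}\lambda&1\\0&\lambda\end{smallmatrix}\right)Q^{-1}$, computes $A^n$ explicitly, and reads off $\psi(A^2)=\psi(\lambda^2)+2\lambda\psi'(\lambda^2)(A-\lambda I)$. Your argument via Cayley--Hamilton, which gives $(A-\lambda I)^2=0$ directly and then truncates the Taylor series of $f$ at first order, is cleaner: it handles both subcases uniformly without needing to conjugate to Jordan form or compute powers explicitly. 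The paper's version has the minor advantage of being entirely hands-on (no appeal to functional calculus or to the validity of substituting a matrix into a Taylor series), but your approach is the standard one and requires no more justification than the paper's own use of Sylvester's formula.
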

\begin{proof}
If $\lambda_+\neq\lambda_-$, then we can use Sylvester's formula to find

\[
\psi(A^2) =\frac{1}{\lambda_+ - \lambda_-} (A - \lambda_- I_2)\psi(\lambda_{+}^2) + \frac{1}{\lambda_- - \lambda_+} (A - \lambda_+ I_2)\psi(\lambda_{-}^2).
\]

If $\lambda_+ = \lambda_- = \lambda$, we have two cases:

\begin{enumerate}
\item[(i)]
$A$ is diagonalisable. In this case we have only one possibility for $A$, namely $\lambda I$, i.e. $\xi = \lambda$. Using Sylvester's formula we find

\[
\psi(A^2) = \psi(\lambda^2) I.
\]

\item[(ii)]
$A$ is not diagonalisable. In this case, there exists a matrix $Q$ such that

\[
A = Q \left( \begin{matrix}
\lambda & 1\\
0 & \lambda
\end{matrix}\right) Q^{-1}.
\]
Easy calculations show

\[
A^n = Q \left( \begin{matrix}
\lambda^n & n \lambda^{n-1}\\
0 & \lambda^n
\end{matrix}\right) Q^{-1} = \lambda^nI + Q \left( \begin{matrix}
0 & n \lambda^{n-1}\\
0 & 0
\end{matrix}\right) Q^{-1}.
\]
It is now easy to see that

\[
\psi(A^2) = \psi(\lambda^2) + 2\lambda \psi'(\lambda^2) (A-\lambda I).
\]
\end{enumerate}

\end{proof}

\begin{remark}
When putting $a=d=0$, $c=1$ and $b = \partial_t$ we reobtain the parabolic Dirac operator $D_{\ux,t}$. Here we are slightly abusing notation by allowing $b$ to be a partial derivative with respect to $t$, as this still commutes with $\ux$-variables. If we now apply Proposition \ref{prop evaluating in eigenvalues} with $\lambda_{\pm} = \pm \sqrt{-s}$, where $s = \partial_t$, to the solution of the generalised parabolic Dirac operator, we get the same result as in Section \ref{Sec3}.

\end{remark}

\section{Conclusion and outlook}

In Theorem \ref{thmPara} we have shown that solutions of the parabolic Dirac operator can be written in terms of hypergeometric functions. Using these techniques we proved that solutions of the generalised Dirac operator can be written as a series of hypergeometric functions multiplied with spherical monogenics. If we represent the Clifford algebra $\cC \ell_{1,1}$ by complex-valued $2\times2$ matrices we can use Sylvester's formula to interpret the results.

In future research we will address the question of how the symmetries of \cite{AK1, AK2} act on the solutions of the L\'evy-Leblond equation.

 \section*{Acknowledgements}
This paper was written during a visit of SB to Ghent University, funded by CSC. The work of HDB is supported by the Research Foundation Flanders (FWO) under Grant EOS 30889451.

\end{document}